\def\BibTeX{{\rm B\kern-.05em{\sc i\kern-.025em b}\kern-.08em
    T\kern-.1667em\lower.7ex\hbox{E}\kern-.125emX}}
\newcommand{\A}{{\mathcal{A}}}
\newcommand{\bbP}{P}
\newcommand{\vs}{\vspace{-1mm}}
\newcommand{\hs}[1][0.5]{\hspace{-#1mm}}
\DeclareMathOperator*{\argmax}{arg\,max}
\newtheorem{definition}{Definition}
\newtheorem{theorem}{Theorem}
\newtheorem{proposition}{Proposition}
\newtheorem{lemma}{Lemma}
\newcommand{\online}[1]{This proof is omitted for space and appears online \cite{collins2021robust}.}
\renewcommand{\online}[1]{#1}
\begin{document}
\title{A Coupling Approach to Analyzing Games with Dynamic Environments}
\author{Brandon C. Collins, \IEEEmembership{Student Member, IEEE}, Shouhuai Xu, \IEEEmembership{Senior Member, IEEE} and Philip N. Brown, \IEEEmembership{Member, IEEE}
\thanks{A preliminary version of this work appeared in~\cite{collins2021cdc}.}
\thanks{This work was supported in part by NSF Grants \#2122631, \#2115134, DEB-\#2032465, and ECCS-\#2013779, ARO Grant \#W911NF-17-1-0566, and Colorado State Bill 18-086.}
\thanks{ Brandon C. Collins, Shouhuai Xu, and Philip N. Brown are with the Department of Computer Science, University of Colorado Colorado Springs, Colorado Springs, CO 80918 USA (emails: \{bcollin3,sxu,pbrown2\}@uccs.edu).}
}

\maketitle

\begin{abstract}
The theory of learning in games has extensively studied situations where agents respond dynamically to each other by optimizing a fixed utility function.
However, in real situations, the strategic environment varies
as a result of past 
agent
choices.
Unfortunately, the analysis techniques that enabled  a rich characterization of the emergent behavior in 
static environment games fail 
to cope with dynamic environment games.
To address this, we develop a general framework using probabilistic couplings 
to extend the analysis of static environment games to dynamic ones.
Using this approach, we obtain sufficient conditions under which traditional characterizations of Nash equilibria with best response dynamics and stochastic stability with log-linear learning can be extended to dynamic environment games.
As a case study, we pose a model of cyber threat intelligence sharing between firms and a simple dynamic game-theoretic model of social precautions in an epidemic, both of which feature dynamic environments.
For both examples, we obtain conditions under which the emergent behavior is characterized in the dynamic game by performing the traditional analysis on a reference static environment game.

\end{abstract}

\begin{IEEEkeywords}
Dynamic Environment, Game Theory, Multi-agent Systems, Learning in Games
\end{IEEEkeywords}

\section{Introduction}
\label{sec:introduction}
In social systems and distributed engineering systems, collective behavior is the result of many individuals making intertwined self-interested choices.
    In many cases, the value of a particular choice depends not only on the current choices being made by others, but also on the history of past choices.

	In principle, these socio-environmental feedback loops can be analyzed using techniques from \emph{game theory}, which has a long history of analyzing the society-scale effects of self-interested behaviors.
	For instance, game theory has long been used to study the spread of social conventions \cite{young1993evolution} using models such as the graphical coordination game \cite{kearns2013graphical} with the stochastic learning algorithm known as log-linear learning \cite{marden2012revisiting}.
    However, traditional analysis techniques almost uniformly assume that the game's utility functions are fixed for all time, so that the agents' choices over time can be described by a stationary Markov process.
	However, such analyses fail or become unwieldy when utility functions depend on the history of play. 
	
	Analysis techniques for history-dependent games have many applications. 
	For example, in a global pandemic, the individual choice to adopt protective measures (e.g., wearing masks) may be made in response to the behavior of others and the prevalence of the disease.
	In turn, the prevalence of the disease is a function of the history of individual choices to adopt protective measures.
	As another example, game theoretic methods are frequently proposed in the area of distributed control of multiagent systems~\cite{Chandan2019,Collins2020,marden2013distributed,kanakia2016modeling}.
	However, in a distributed control application, agents' actions may directly modify the strategic environment; for instance if a search-and-rescue UAV identifies a disaster victim, that victim may be removed from the list of other UAVs' objectives.
    Other applications that can be modeled by history-dependent games are in machine learning \cite{wang2019evolutionary,garciarena2018evolved,costa2019coegan}
    and biology \cite{tilman2017maintaining,tilman2020evolutionary}.


Dynamic environments may be either random or deterministic functions of agent behaviors.
    The random case has been used to understand cell behavior and evolution \cite{wolf2005diversity},
    where the environment is a Markov model whose state
    represents the current conditions the cell inhabits.
    This approach is further studied in \cite{mehta2015mutation}, which characterizes the importance of genetic mutations in dynamic environments.
    On the other hand, the deterministic case is often studied in the context of dynamical systems, where differential equations dictate the evolution of the environment.
    One such example is \cite{weitz2016oscillating}, where the authors characterize an oscillating tragedy of the commons effect under certain environmental feedback scenarios.
    Revisiting population games, \cite{mai2018cycles} studies a zero-sum Rock Paper Scissors style game with replicator dynamics, where the environment responds negatively to the increase in frequency of any individual population.
    They show that the model is recurrent over time.
    This result is extended in \cite{skoulakis2020evolutionary} where the recurrence is generalized to a class of dynamics with environments and populations interacting in a networked fashion.
    Particularly, it is shown that the class of dynamics is equivalent to a zero-sum poly-matrix game under replicator dynamics and can be studied using traditional game theoretic techniques.
    
    In this paper, we develop a general framework for analyzing binary-action games with dynamic environments, which we term {\em history-dependent games}.
    We develop a probabilistic coupling between a reference static game and a history-dependent game.
    We show that if the utility functions of the history-dependent game can be referenced appropriately to the static game, then traditional game theoretic results on the static game can be extended to the history-dependent game.
    Specifically, the coupling provides a general inequality (Lemma~\ref{thm:proof of stoch dom+}) which 
    compares the probability that the history-dependent game is in a given state to that of the static game, for all time steps.
    Using this inequality, we develop sufficient conditions under which traditional characterizations of Nash equilibria with best response dynamics and stochastic stability with log-linear learning can be extended to history-dependent games.
    
    To show the applicability of our results, we consider two examples of history-dependent games in the contexts of cybersecurity and epidemiology.
    The first example (given in Section~\ref{sec:CTI}) is a Cyber Threat Intelligence (CTI) sharing game where firms generate CTI over time and decide whether to share CTI with each other.
    This problem has been studied widely using game theory \cite{hausken2007,gao2014,gao2016differential,DBLP:conf/icc/ToshSKKM15,solak2020optimal,collins2021paying}, but all of these works consider a fixed value of CTI.
    This is an unrealistic assumption, because different pieces of CTI may have different intrinsic values or due to timeliness \cite{wagner2019cyber}.
    We formulate a CTI sharing game which admits a broad class of variable CTI functions (encoding the environment) and show sufficient conditions under which the all-sharing action profile
    is comparable to a Nash equilibrium in a fixed setting with best response dynamics and is stochastically stable with log-linear learning.
    
    In the second example (developed in Section~\ref{sec:epidemics}), we study the feedback between the prevalence of an epidemic and individuals' adoption of preventative measures.
    It is known that preventative measures like mask wearing or social distancing reduces the risk of infection spreading between individuals.
    However, an individual's willingness to practice such measures depends on prevalence of the epidemic.
    That is, individuals are more likely to adopt preventative measures when the epidemic is widespread, which then in turn mitigates the epidemic, creating a feedback loop.
    To capture this, we intertwine the graphical coordination game model of social conventions and the compartmental epidemic SIS model.
    In this new model, the coordination game's payoffs are impacted by the current state of the SIS model, and the infectiousness parameter of the SIS model depends on the current choices of agents.
    We use the 
    coupling to derive sufficient conditions under which all individuals practice preventative conventions as the unique stochastically stable state of log-linear learning.

\section{Model}
    \subsection{Game Formulation}
        In this work, we consider binary action games.
        Let $N=\{1,2,3,\dots,n\}$ denote the set of agents; agent $i\in N$ has action set $A_i=\{0,1\}$.
        The joint action space is then given by $A=\{0,1\}^{|N|}$.
    	We denote an action profile as $a\in A$ and use $a_i$ to denote agent $i$'s action.
    	Throughout, we use state and action profile interchangeably to refer to a vector of agent choices $a$.
    	We denote actions of all other agents by $a_{-i}=(a_1,a_2,\dots,a_{i-1},a_{i+1},\dots,a_{|N|})$.
    	We denote the all-$1$ action profile as $\vec{1}=(1)^{|N|}_{i=1}$ and similarly for the all-zero profile $\vec{0}$.
    	Further, let $\Delta(A)$ denote the standard probability simplex over $A$.
    	Let $U_i:A\rightarrow \mathbb{R}$ be agent $i$'s utility function, and let $\mathcal{U}$ denote the space of all utility functions $U_i:A\rightarrow\mathbb{R}$.
    	We write $U=\{U_i\}_{i\in N}$
    	to denote the collection of utility functions for all agents.
    	Thus, we specify a static game using tuple $g=(N,A,U)$.
    	

	    We consider a general class of games which generalize the preceding static game to enable the past behavior of agents to influence the current behavior.
	    To accomplish this, we allow the utility function to be a function of previous actions taken by agents.
	    We begin by formalizing the past behavior or history of agent actions.
	    %
	    We use $\A_T$ to denote the set of joint action histories of length $T\in\mathbb{N}$,
	    and denote the set of all histories as $\mathcal{A}=\cup_{T\in \mathbb{N}}\mathcal{A}_T$.
	    We use $\alpha\in\mathcal{A}_T$ to refer to a history of action profiles (which we refer to as a \emph{path}) and use superscripts to denote time indices so that $\alpha=(\alpha^1,\dots,\alpha^T)$.
	    We 
	    use $\alpha^T$ to denote the last action profile on a path $\alpha$ of length $T$.
	    We define $A$ and $\mathcal{A}_T$ as partially ordered sets with
	    partial order $\geq_A$, where $a'\geq_A a$ when $a,a'\in A$ and $a'_i\geq a_i$ for all $i\in N$, recalling that $a'_i,a_i\in \{0,1\}$.
	    Using this, we further define partial order $\geq_{\mathcal{A}_T}$, such that $\bar{\alpha}\geq_{\mathcal{A}_T}\alpha$ when $\alpha,\bar{\alpha}\in \mathcal{A}_T$ and $\bar{\alpha}^t\geq_{A}\alpha^t$ for all $t\in\{1,2,...,T\}$.
	    
	    To model history-dependent utility functions, let $U^\alpha_i:A\rightarrow \mathbb{R}$, where this utility function is not only specific to agent $i$ but also to the history $\alpha$.
    	Let $U^\alpha=(U^\alpha_1,U^\alpha_2,...,U^\alpha_{|N|})$ denote each agent's utility function given history $\alpha$
    	and let $U^\mathcal{A}=\{U^\alpha\mid\alpha\in\mathcal{A}\}$ be the set of utility functions across all paths.
        We denote a \emph{history-dependent game} as tuple $(N,A,U^{\mathcal{A}})$ and let $\mathcal{G}^{\mathcal{A}}$ be the set of all such tuples.
        
        In general, the analysis of history-dependent games is difficult as the utility function can vary wildly between each time step.
        Accordingly, we restrict our attention to a subset of history-dependent games that have an important utility function property, namely that there exists a reference static game with certain properties relating the history-dependent and static utility function.
        This is the basis for us to extend the traditional game theoretic analysis of static games to history-dependent games.
        We call this specific class {\em aligned history-dependent games}, formally defined as follows. 
        

    	\begin{definition}[aligned history-dependent game]\label{def:binary action history var}
    		We call a tuple $g=(N,A,U^\mathcal{A})\in \mathcal{G}^{\mathcal{A}}$ an \textit{aligned history-dependent game} if there exists a static game $\hat{g}=(N,A,\hat{U})$:
    		\begin{enumerate}
    			\item $U^\alpha_i(1,\alpha^T_{-i})\geq\hat{U}_i(1,a_{-i})$
    			\item $\hat{U_i}(0,a_{-i})\geq U^{\alpha}_i(0,\alpha^T_{-i})$
    		\end{enumerate}
            for any $\alpha\in\mathcal{A}$, $a,a'\in A$, $T\in\mathbb{N}$ such that $\alpha^T_{-i}\geq_{A_{-i}} a_{-i}$ and $a,a'$ vary by only a unilateral deviation.         
            For convenience, we denote ordering $\geq_{A_{-i}}$ over $A_{-i}=\{0,1\}^{|N|-1}$ equivalently to $\geq_A$.
        \end{definition}
    
    Since we focus on analyzing history-dependent games corresponding to
    relatively well-understood static ones, Definition \ref{def:binary action history var} ensures that a suitable static game exists for comparison. 
    Intuitively, Definition \ref{def:binary action history var} defines a subclass of history-dependent games such that the 1 actions benefit from history relative to a reference static game, but the 0 actions can only lose due to history.
    More specifically, it ensures that playing the $1$ action is always more desirable for agents in the history-dependent game relative to the static game whenever $\alpha^T_{-i}\geq_{A_{-i}}a_{-i}$ (i.e., when a superset of agents are playing $1$ in the history-dependent game relative to the static game).
    Thus, an aligned history-dependent game has the property that for all histories, having more agents playing $1$ can only make playing $1$ more desirable for other agents. 
    

\subsection{Properties of Learning Rules in Games}
    To establish maximum generality for our results, we provide conditions on learning rules under which the couplings can be applied, and show that both best response and log-linear learning satisfy these conditions.
    We begin by defining an individual learning rule as a function which gives the probability that action profile $a$ will deviate to $a'$ in a single time step given that only agent $i$ can update its action.
    \begin{definition}[individual learning rule]\label{def:individual}
        Function $P_i:A\times A\times  \mathcal{U}^n\rightarrow [0,1]$ is an individual learning rule if for any $a,a'\in A$, $U\in\mathcal{U}^n$ we have
        \begin{enumerate}
            \item $\sum_{\bar{a}\in A}P_i(a,\bar{a},U)=1$,
            \item $P_i(a,a',U)\geq 0$, and
            \item $P_i(a,a',U)=0$ $a_{-i}\neq a'_{-i}$.
        \end{enumerate}
    \end{definition}
        The first two conditions 
        say that $P_i$ is a valid probability measure over $A$ given prior action $a$ and utility vector $U$.
        The third condition ensures that only agent $i$ changes its action with a positive probability.
        We now give a condition on individual learning rules to derive our main results.
    \begin{definition}[local individual learning rule]
        We say an individual learning rule $P_i$ is \emph{local} if there exists $\bar{P}_i: A_i\times \mathbb{R}^{|A_i|}\rightarrow [0,1]$ such that for any $a\in A$, $a'_i\in A_i$ we have:
        \begin{equation}\label{eq:local LR}
            P_i(a,(a'_i,a_{-i}),U)=\bar{P}_i(a'_i,\vec{U}_i(a_{-i}))
        \end{equation}
where $\vec{U}_i(a_{-i})=(U_i(a_i,a_{-i}))_{a_i\in A_i}$ is a vector of payoffs for each possible action of agent $i$ given the actions of other agents $a_{-i}$.
    \end{definition}
    
    Learning rule $P_i$ being local has several implications.
    First, the probability that agent $i$ selects a given action only depends on its utility function $U_i$, which is a property known as {\em uncoupled} \cite{hart2003uncoupled}.
    Additionally, the probability does not depend on agent $i$'s previous action $a_i$ (i.e., agent $i$ cannot be biased toward their previous action in any way).
    We now pose an additional monotonicity condition on learning rules to ensure that increases in payoffs do not decrease the probability that an action is played.
\begin{definition}[monotone individual learning rule]
    An individual learning rule $P_i:A\times A\times  \mathcal{U}^n\rightarrow [0,1]$ is \emph{monotone with respect to utility} if for any utility function vector $U\in\mathcal{U}^n$, individual agent's action $a'_i\in A_i$, action profile $a\in A$,  nonnegative constant $l\geq0$, and defining $\bar{U}=(U_1,U_2,\dots,U_{i-1},\bar{U}_i,U_{i+1},\dots, U_n)$ and 
    \begin{equation}\label{eq:monotone wrt utility}
        \bar{U}_i(a)=
        \begin{cases}
            U_i(a)+l & a_i=a'_i \\
            U_i(a) & \mbox{else},
        \end{cases}
    \end{equation}
    we have $P_i(a,a',\bar{U})\geq P_i(a,a',U)$ where $a'=(a'_i,a_{-i})$. 
\end{definition}

The preceding condition can be interpreted as follows.
First, select some action of agent $i$, action $a_i'$, and increase its utility such that the increase does not depend on the actions of other agents.
Then, if for any increase in utility the probability agent $i$ selects $a'_i$ does not decrease, the learning rule is monotone with respect to utility.
We now use our definition of individual learning rule to define a learning rule where all agents can update their actions. 

    \begin{definition}[asynchronous learning rule]
    Given a vector of individual learning rules $\vec{P}=(P_1,P_2,\dots,P_n)$, we define an {\em asynchronous learning rule} as
    \begin{equation}\label{eq:asynchronus learning rule}
        P(a,a',U)=\frac{1}{n}\sum_{i\in N}P_i(a,a',U).
    \end{equation}         
    \end{definition}

    We call $P$ asynchronous because it has the property that $P(a,a',u)=0 $ if $a,a'$ vary by more than one action, which follows by Definition~\ref{def:individual}.
    That is, $P$ only permits one agent to change its action at a time.
    More specifically, $P$ selects a single agent $i$ according to a uniform distribution and updates its action according to the distribution given by $P_i$.
    We say asynchronous learning rule $P$ is local and monotone if each individual learning rule in $\vec{P}$ is local and monotone with respect to utility.

Throughout the paper, we couple a learning rule $P$ with game $g$.
We adopt the convention that if game $g\in \mathcal{G}^{\mathcal{A}}$ is a history-dependent game, then the associated learning rules are given by 
\begin{equation} \label{eq:one step with hitory}
    P^\alpha(a'):=P(\alpha^T,a',U^\alpha).
\end{equation}
Additionally, if $\hat{g}=(N,A,\hat{U})$ is a static game, then we similarly define
\begin{equation}\label{eq:one step without hitory}
    \hat{P}^a(a'):=P(a,a',\hat{U}).
\end{equation}
Correspondingly, the individual learning rules are given by
\begin{equation}
\label{eq:individual one step with history}
    P_i^\alpha(a_i'):=P(\alpha^T,(a'_i,\alpha^T_{-i}),U^\alpha)\mbox{, and}
\end{equation}
\begin{equation}\label{eq:individual one step without history}
    \hat{P}_i^a(a'_i):=P(a,(a'_i,a_{-i}),\hat{U}).
\end{equation}

\subsection{Example Learning Rules}
    In the previous section, we defined several properties of learning rules of the form $P:A\times A\times \mathcal{U}\rightarrow [0,1]$.
    In this section, we formulate a variety of well studied learning rules with respect to our learning rule definitions and show they satisfy all of the above properties.

The first example is called log-linear learning, where each individual learning rule $P_i$ is defined as 
        \begin{equation}\label{eq:log-linear learning}
            P_i(a,a',U;\tau)=
            \begin{cases}
             \frac{\exp({\frac{1}{\tau}U_i(a')})}{\sum_{\bar{a}_i\in A_i}\exp({\frac{1}{\tau}U_i(\bar{a}_i,a_{-i})})} & a_{-i}=a'_{-i}\\
                0 & a_{-i}\neq a'_{-i},
            \end{cases}
        \end{equation}
    	where $\exp(x):=e^x$, and $\tau$ is the \emph{temperature} parameter that governs the rationality of agents.
    	As $\tau\to0$, agents best respond with high probability; and as $\tau\to\infty$, agents choose actions uniformly at random.
        Traditional log-linear learning \cite{marden2012revisiting} can be implemented in the framework of \eqref{eq:asynchronus learning rule} by {selecting $\tau$ and letting $\vec{P}=(P_i(\cdot\ ;\tau))^n_{i=1}$.}
        Then, $P(\cdot;\vec{P})$ is equivalent to the previously studied log-linear learning function.
        
        One of the appeals of log-linear learning is that for a special class of static games known as {\em potential games}, log-linear learning has the desirable equilibrium selection properties.
        Formally, a static game $g=(N,A,U)$ is an \emph{exact potential game} if there exists a potential function $\phi$ such that
    	\begin{equation}\label{eq:potential function definition}
        	U_i(a_i',a_{-1})-U_i(a_i,a_{-i})=\phi(a_i',a_{-1})-\phi(a_i,a_{-i})
	    \end{equation}
	    for any $a\in A$, and $a_i,a_i'\in A_i$.
    	Under log-linear learning in potential games, maximizers of the potential function are \emph{stochastically stable} \cite{alos2010logit}.
    	We say $a\in A$ is \emph{strictly} stochastically stable if the following 
    	holds \cite{brown2019}:
    	For any $\epsilon>0$ there exists $\mathcal{T}>0,T<\infty$ such that
		\begin{equation}\label{eqn:SS to g hat}
			\mbox{Pr}(s(t;P)=\vec{1})>1-\epsilon \mbox{ whenever }t>T,\tau<\mathcal{T}
		\end{equation}
	    where $s(\cdot)$ is a random variable representing the action profile at time $t$ under log-linear learning, given temperature $\tau$, initial distribution $\pi\in\Delta(A)$, and game $g$.

	    Exact potential games under log-linear learning may be analyzed using a theory of \emph{resistance trees} \cite{young1993evolution,alos2010logit,pradelski2012learning,marden2012revisiting} to relate potential function maximizers to stochastic stability.
	    However, this analysis depends on the fact that log-linear learning induces an ergodic Markov process on the action profiles,
	    and 
	    it is unclear how to apply resistance tree techniques generally on history-dependent games to show stochastic stability.
        (In Theorem~\ref{thm:SS} we will show how the aligned history-dependent framework may be used to apply stochastic stability to history-dependent games.)
        
        The second example learning rule is the well-studied best response dynamics \cite{young2004strategic}.
        In this learning rule agents always best respond to the actions of other agents (breaking ties uniformly randomly).
        An individual agent's best response learning rule can be given by
        \begin{equation}\label{eq:best response dynamics}
            P_i(a,a',U)=
            \begin{cases}
                \frac{\mathds{1}(a'_i\in \textrm{BR}_i(a_{-i},U))}{|\textrm{BR}_i(a_{-i},U)|} & a_{-i}=a'_{-i}\\
                0 & a_{-i}=a'_{-i}
            \end{cases}
        \end{equation}
	    where $\textrm{BR}_i(a_{-i},U)$ is agent $i$'s best response set given utility function $U$ to the other agent's actions $a_{-i}$.
	    The best response is implemented using \eqref{eq:asynchronus learning rule} by letting $\vec{P}$ be a uniform vector of the above $P_i$ in the form of $P(\cdot:\vec{P})$.
	    
	    The best response learning rule has the property that once it selects a strict Nash equilibrium, the process stays there for all following time steps.
	    Formally, suppose $a\in A$ is a strict Nash equilibrium, then
	    \begin{equation}\label{eq:cournot limit}
	        \Pr(s(t+1,\hat{P}_\pi)=a)\geq\Pr(s(t,\hat{P}_\pi)=a)
	    \end{equation}
	    for all times $t$ and initial distributions $\pi$.

To show that the preceding two example learning rules are applicable to
the aligned history-dependent game framework we propose in this paper, we must show that \eqref{eq:log-linear learning} and \eqref{eq:best response dynamics} are individual, monotone with respect to utility, and local.
        \begin{lemma}\label{thm:individual, monotone, local}
           Log-linear learning given by \eqref{eq:log-linear learning} and best response dynamics given by \eqref{eq:best response dynamics} are individual, monotone with respect to utility, and local.
        \end{lemma}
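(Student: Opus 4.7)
The plan is to verify each of the three properties (individual, local, monotone with respect to utility) for each of the two learning rules by direct computation from the definitions. The argument splits naturally into six small checks, and the only subtlety comes from the tie-breaking in best response dynamics.

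For the individual property, I would verify each of the three clauses of the definition in turn. Both formulas \eqref{eq:log-linear learning} and \eqref{eq:best response dynamics} are defined to equal $0$ when $a_{-i}\neq a'_{-i}$, giving clause (3) for free. Nonnegativity (clause (2)) is immediate in both cases, since the exponential is positive and the indicator / cardinality ratio is nonnegative. For clause (1), the sum $\sum_{\bar a\in A} P_i(a,\bar a,U)$ collapses to a sum over $\bar a_i\in A_i$ by clause (3); for log-linear learning this is the Gibbs normalizer, and for best response it is $\sum_{\bar a_i\in A_i} \mathds{1}(\bar a_i\in \mathrm{BR}_i(a_{-i},U))/|\mathrm{BR}_i(a_{-i},U)|$, which equals $1$ because exactly $|\mathrm{BR}_i(a_{-i},U)|$ terms contribute $1/|\mathrm{BR}_i(a_{-i},U)|$ each.

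For locality, I would inspect the formulas and read off the function $\bar P_i$. In \eqref{eq:log-linear learning}, both the numerator $\exp(U_i(a'_i,a_{-i})/\tau)$ and the denominator $\sum_{\bar a_i}\exp(U_i(\bar a_i,a_{-i})/\tau)$ depend only on the payoff vector $\vec U_i(a_{-i})=(U_i(a_i,a_{-i}))_{a_i\in A_i}$ and the proposed action $a'_i$; defining $\bar P_i$ by the obvious substitution yields \eqref{eq:local LR}. The same reading works for \eqref{eq:best response dynamics}, since $\mathrm{BR}_i(a_{-i},U)=\argmax_{\bar a_i}U_i(\bar a_i,a_{-i})$ is determined by $\vec U_i(a_{-i})$ alone. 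In particular locality implies the value does not depend on $a_i$, which matches the claim in the excerpt.

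For monotonicity under the perturbation \eqref{eq:monotone wrt utility}, set $a'=(a'_i,a_{-i})$. For log-linear learning, introduce the shorthand $x=\exp(U_i(a'_i,a_{-i})/\tau)$ and $y=\sum_{\bar a_i\neq a'_i}\exp(U_i(\bar a_i,a_{-i})/\tau)$; then $P_i(a,a',U)=x/(x+y)$ while $P_i(a,a',\bar U)=xe^{l/\tau}/(xe^{l/\tau}+y)$, and the latter is at least the former whenever $e^{l/\tau}\geq 1$, i.e.\ whenever $l\geq 0$. For best response dynamics I would argue by a case split on whether $a'_i$ is in $\mathrm{BR}_i(a_{-i},U)$ before the perturbation: if it is not, then $P_i(a,a',U)=0$ and monotonicity is trivial; if it is, then raising $U_i$ on the slice $\{a:a_i=a'_i\}$ by $l\geq 0$ keeps $a'_i$ in the best response set and can only expel other actions, so $|\mathrm{BR}_i(a_{-i},\bar U)|\leq |\mathrm{BR}_i(a_{-i},U)|$ and the probability $1/|\mathrm{BR}_i(a_{-i},\bar U)|$ does not decrease.

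The only real obstacle is the tie-breaking step in the best response case: one has to be careful that adding $l$ at $a'_i$ cannot bump $a'_i$ out of the best response set (which would not happen in the binary-action setting of the paper but is worth stating cleanly for general $A_i$). Everything else is a direct verification from the formulas.
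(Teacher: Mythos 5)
Your proposal is correct and follows essentially the same route as the paper's proof: direct verification of the three definitions for each rule, the same denominator/cross-multiplication bound for log-linear monotonicity, and the same case split on whether $a'_i\in\mathrm{BR}_i(a_{-i},U)$ for best response. The only difference is cosmetic detail (e.g., you spell out the best-response normalization sum that the paper omits "for space").
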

\begin{proof}
We verify both log-linear learning and best response dynamics are individual, monotone with respect to utility, and local in order.
%
%
Beginning with individual learning rule, we verify that log-linear learning, given by $\eqref{eq:log-linear learning}$, satisfies the three conditions.
            Let $U\in\mathcal{U}^n$ be a vector of utility functions.
            The first condition of individuality is easy to verify algebraically because for some $a\in A$ and some agent $i$
            \begin{equation}
                \sum_{a'\in A}P_i(a,a',U;\tau)=
                \sum_{a'_i\in A_i}\frac{\exp({\frac{1}{\tau}U_i(a'_i,a_{-i})})}{\sum_{\bar{a}_i\in A_i}\exp({\frac{1}{\tau}U_i(\bar{a}_i,a_{-i})})}=1
            \end{equation}
        where the first equality follows as $P(a,a',U)=0$ when $a,a'$ vary by more than a singe agent's action.
        A similar argument holds for the best response dynamics so we omit it for space.
        It is easy to see the second condition of Definition~\ref{def:individual} for both learning rules as both are strictly nonnegative by definition.
        The third condition holds since both learning rules explicitly have $P_i(a,a')=0$ whenever $a_{-i}\neq a'_{-i}$.
        
        We now show both learning rules are monotone with respect to utility.
        Beginning with log-linear learning, define $\bar{U}_i(a)$ according to \eqref{eq:monotone wrt utility} for some agent $i$, $l\geq0$, and action $a_i'\in A_i$.
        Let $\bar{U}=(U_1,U_2,\dots,\bar{U}_i,\dots,U_n)$ and let $a'=(a'_i,a_{-i})$; we show $P(a,a',\bar{U})\geq P(a,a',U)$ directly:
        \begin{equation}
        \begin{aligned}
            P(a,a',\bar{U})&=\frac{\exp(\frac{1}{\tau}l)\exp({\frac{1}{\tau}U_i(a')})}{\sum_{\bar{a}_i\in A_i}\exp({\frac{1}{\tau}\bar{U}_i(\bar{a}_i,a_{-i})})}\\
            &\geq\frac{\exp(\frac{1}{\tau}l)\exp({\frac{1}{\tau}U_i(a')})}{\exp(\frac{1}{\tau}l)\sum_{\bar{a}_i\in A_i}\exp({\frac{1}{\tau}U_i(\bar{a}_i,a_{-i})})}\\
            &=P(a,a',U)
        \end{aligned}
        \end{equation}
        for any action $a$. 
        
    	Monotonicity with respect to utility can be seen for best response dynamics in two cases.
    	First, suppose $a'_i\in\textrm{BR}_i(a_{-i},U)$, then we have both $a'_i\in\textrm{BR}_i(a_{-i},\bar{U})$ and $|\textrm{BR}_i(a_{-i},\bar{U})|\leq |\textrm{BR}_i(a_{-i},U)|$.
    	The first result follows intuitively as only the utility of $a'_i$ increased from $U_i$ to $\bar{U}_i$ and thus $a'_i$ must remain in the best response set.
    	The second follows as if $l>0$ then $a'_i$ becomes the unique best response and the best response set remains unchanged if $l=0$.
    	With both of these results together, we conclude $P_i(a,a',\bar{U})\geq P_i(a,a',U)$ from the definition of \eqref{eq:best response dynamics}.
    	The other case, $a'_i\notin\textrm{BR}_i(a_{-i},U)$, is trivial as $P_i(a,a',U)=0$ is a lower bound of the nonnegative function $P(\cdot,\bar{U})$.
    	
    	Locality for both log-linear learning and best response dynamics can be seen directly from their definitions, as they only dependent on selected action $a'_i$ and utilities of the form $U(\cdot,a_{-i})$ for prior action profile $a$.
        Thus, log-linear learning and best response learning are individual learning rules, monotone with respect to utility, and local.
        \end{proof}

\section{Main Contribution}

    To characterize history-dependent game $g$ using a reference game $\hat{g}$ as per the definition of aligned history-dependent games, we develop a monotone coupling between $P_\pi$ and $\hat{P}_\pi$.
    Because developing the coupling is technically involved, we defer it to Section~\ref{sec:proofs};
    in this section we present the game theoretic significance of coupling.
    {We begin by giving a broad result which follows immediately from the existence of the coupling to relate increasing metrics in the reference game to their counterparts in the history-dependent one.}
    \begin{theorem} \label{thm:z functions}
        Let $g$ be an aligned history-dependent game, $P$ be a local and monotone asynchronous learning rule, and $Z:\mathcal{A}_T\rightarrow\mathbb{Z}^+$ be an increasing function with respect to ordering $\geq_{\mathcal{A}_T}$. Then, we have
        \begin{equation}
            \mathbb{E}_{P_\pi}(Z)\geq\mathbb{E}_{\hat{P}_\pi}(Z)
        \end{equation} for any $\pi \in \Delta(A),~T\in\mathbb{N}.$
    \end{theorem}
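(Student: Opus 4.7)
The plan is to derive the theorem as an immediate corollary of the monotone coupling that will be established in Lemma~\ref{thm:proof of stoch dom+}, followed by a one-line stochastic-dominance step.

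First, I would invoke Lemma~\ref{thm:proof of stoch dom+} to obtain a joint probability space carrying two random paths, $\alpha$ with marginal law $P_\pi$ and $\hat{\alpha}$ with marginal law $\hat{P}_\pi$, such that $\alpha \geq_{\mathcal{A}_T} \hat{\alpha}$ almost surely on this space. Concretely, this means $\alpha^t \geq_A \hat{\alpha}^t$ for every $t\in\{1,\dots,T\}$ with probability one, where the marginals are those of the history-dependent and reference static processes run from the same initial distribution $\pi$.

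Second, since $Z$ is assumed increasing with respect to $\geq_{\mathcal{A}_T}$, the pointwise inequality $Z(\alpha)\geq Z(\hat{\alpha})$ holds almost surely on the coupled space. Taking expectations preserves inequality (both random variables are integer-valued on the finite set $\mathcal{A}_T$), so
\begin{equation}
\mathbb{E}_{P_\pi}(Z) \;=\; \mathbb{E}[Z(\alpha)] \;\geq\; \mathbb{E}[Z(\hat{\alpha})] \;=\; \mathbb{E}_{\hat{P}_\pi}(Z),
\end{equation}
which is the claimed bound.

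The hard part is the coupling itself, which this theorem takes as a black box. Constructing it requires showing inductively that at each time step the asynchronous update on the coupled space can be performed in a way that preserves the pathwise domination $\hat{\alpha}^t \leq_A \alpha^t$. The three structural hypotheses should enter precisely here: locality of $P$ reduces the per-step update of each agent $i$ to the payoff vector $\vec{U}_i(\cdot)$ it faces; the two inequalities in Definition~\ref{def:binary action history var} ensure that in the history-dependent game agent $i$'s $1$-action is weakly favored, and its $0$-action weakly disfavored, relative to the reference game whenever $\alpha^T_{-i}\geq_{A_{-i}}\hat{\alpha}^T_{-i}$; and monotonicity with respect to utility then guarantees that the probability of landing in action $1$ is at least as large in the history-dependent process as in the reference one, which is exactly the ingredient needed to couple the two marginals while retaining domination. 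Since that construction is the technical heart of the paper and is deferred to Section~\ref{sec:proofs}, the proof of Theorem~\ref{thm:z functions} itself is just the two steps above.
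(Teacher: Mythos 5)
Your proof is correct and takes essentially the same route as the paper: the paper likewise treats the pathwise monotone coupling of $\hat{P}_\pi$ and $P_\pi$ as the deferred technical content (Theorem~\ref{thm:paths}, assembled from the one-step couplings of Theorem~\ref{thm:nu}) and then concludes via stochastic dominance for increasing $Z$, citing the identity in Proposition~\ref{thm:increasing functions in expectation} rather than your almost-sure pointwise argument, which is an equivalent step. One correction: the coupling is not supplied by Lemma~\ref{thm:proof of stoch dom+} --- in the paper that lemma is a downstream consequence of this very theorem, so invoking it here would be circular; the result you actually need is Theorem~\ref{thm:paths} in Section~\ref{sec:proofs}, whose construction you otherwise describe accurately.
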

The proof of Theorem \ref{thm:z functions} will be given in Section~\ref{sec:proof of z functions}.

    \begin{figure}[!htbp]
        \centering
        \includegraphics[scale=0.6]{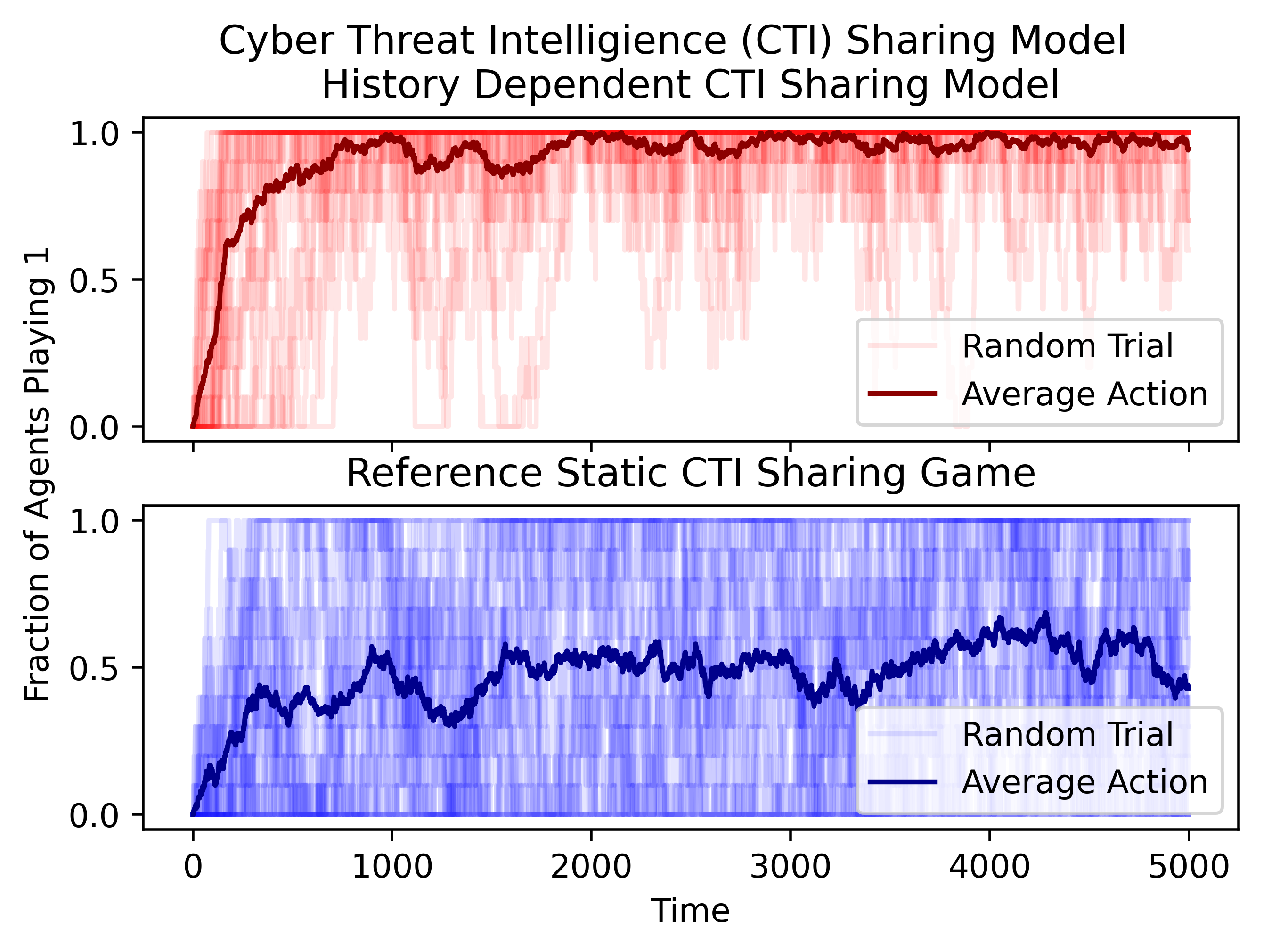}
        \caption{Comparison of Cyber Threat Intelligence Sharing Game between a history-dependent game and the reference static game with log-linear learning. This experiment was performed on a ring graph of 10 nodes with 25 random trials and a temperature $\tau=0.1$. The full details of this experiment can be found in Section~\ref{sec:CTI}.}
        \label{fig:CTI}
    \end{figure}
    \begin{figure}
        \centering
        \includegraphics[scale=0.6]{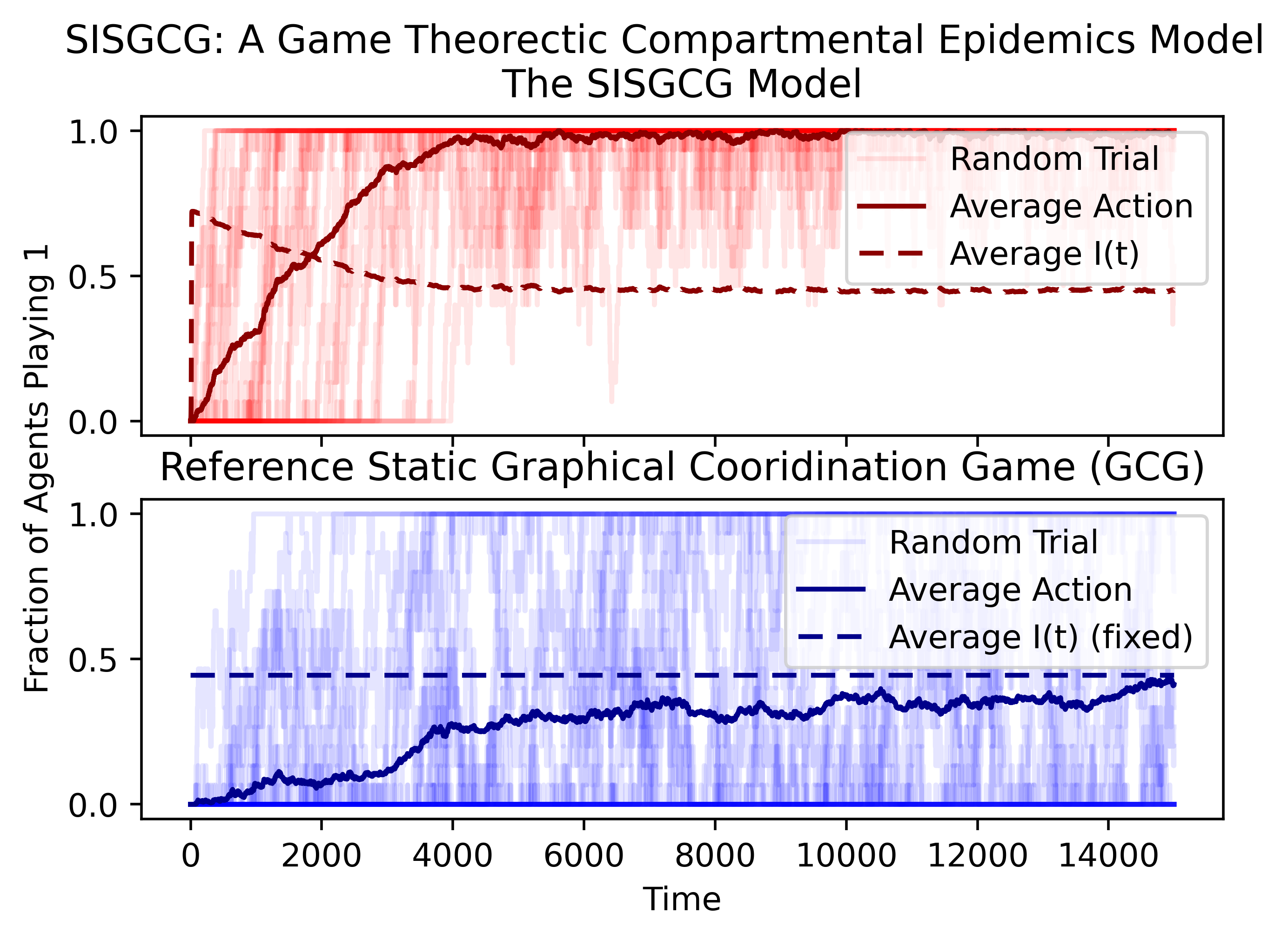}
        \caption{Comparison between a history-dependent SISGCG model and the reference static GCG model with lg-linear learning. This experiment was performed on a ring graph of 15 nodes with 40 random trials and a learning temperature $\tau=0.3$. 
        The dotted lines show the average disease $I(t)$ across all trials at time step $t$. The full details of this experiment can be found in Section~\ref{sec:epidemics}.}
        \label{fig:SISGCG}
    \end{figure}

The above $Z$ functions can represent a wide range of metrics of interest.
    For instance, $Z$ could be the total time spent in the $\vec{1}$ state or the total time during which some proportion of agents were playing 1.
    Thus, Theorem~\ref{thm:z functions} gives a lower bound on the expected value of a broad class of increasing metrics on the history-dependent game of interest.
Figures~\ref{fig:CTI}-\ref{fig:SISGCG} show this intuition by using two specific instances of aligned history-dependent games, whose details are deferred to Section~\ref{sec:CTI} and Section~\ref{sec:epidemics}, respectively.
In both plots, the faint red curves show trials of an aligned history-dependent game and the faint blue curves show the static reference game, both under log-linear learning.
It can be seen in both models that trials of the history-dependent game frequently feature more agents playing the $1$ action than in the static game, 
{confirming}
Theorem~\ref{thm:z functions} over a broad class of monotone metrics $Z$. 
    
To develop further applications of Theorem~\ref{thm:z functions}, we consider a specific $Z$ function.
    In doing so we obtain a property known as {\em stochastic dominance} as shown in the following lemma.
	\begin{lemma} \label{thm:stochastic dominance}
		If $g\in \mathcal{G}^{\mathcal{A}}$ is an aligned history-dependent game with corresponding static game $\hat{g}$, $P$ is a local and monotone asynchronous learning rule, and $\mathcal{I}\subseteq \mathcal{A}_T$ is an upper set (i.e., $\alpha\in\mathcal{I}\implies \alpha'\in\mathcal{I}$ if $\alpha'\geq_{\mathcal{A}_T}\alpha$), then $P_\pi(\mathcal{I})\geq \hat{P}_\pi(\mathcal{I})$.
    \end{lemma}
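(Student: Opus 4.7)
The plan is to derive this lemma as a direct corollary of Theorem~\ref{thm:z functions} by instantiating $Z$ as the indicator function of the upper set $\mathcal{I}$. Concretely, I would define
\begin{equation}
Z(\alpha) := \mathds{1}[\alpha \in \mathcal{I}],
\end{equation}
which is a well-defined function $\mathcal{A}_T \to \mathbb{Z}^+$ taking values in $\{0,1\}$.

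The key step is verifying that $Z$ is increasing with respect to the partial order $\geq_{\mathcal{A}_T}$. Suppose $\bar{\alpha} \geq_{\mathcal{A}_T} \alpha$. If $Z(\alpha) = 0$, the inequality $Z(\bar{\alpha}) \geq Z(\alpha)$ holds trivially since $Z \geq 0$ everywhere. If $Z(\alpha) = 1$, then $\alpha \in \mathcal{I}$, and the upper-set property of $\mathcal{I}$ immediately gives $\bar{\alpha} \in \mathcal{I}$, hence $Z(\bar{\alpha}) = 1 \geq Z(\alpha)$. Thus $Z$ is increasing in the required sense, and the hypotheses of Theorem~\ref{thm:z functions} are satisfied (we retain the assumptions that $g$ is aligned and $P$ is a local and monotone asynchronous learning rule).

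I would then invoke Theorem~\ref{thm:z functions} to conclude
\begin{equation}
\mathbb{E}_{P_\pi}(Z) \;\geq\; \mathbb{E}_{\hat{P}_\pi}(Z).
\end{equation}
Because $Z$ is an indicator, both sides reduce directly to the probabilities of the event $\mathcal{I}$, i.e.\ $\mathbb{E}_{P_\pi}(Z) = P_\pi(\mathcal{I})$ and $\mathbb{E}_{\hat{P}_\pi}(Z) = \hat{P}_\pi(\mathcal{I})$, yielding the desired inequality $P_\pi(\mathcal{I}) \geq \hat{P}_\pi(\mathcal{I})$.

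There is no genuine obstacle here; the entire content of the lemma is packaged in Theorem~\ref{thm:z functions}, and the only non-trivial substep is checking monotonicity of the indicator, which is exactly what the upper-set hypothesis encodes. The real technical work lies upstream, in constructing the monotone coupling used to prove Theorem~\ref{thm:z functions} (deferred by the authors to Section~\ref{sec:proofs}); once that coupling is available, Lemma~\ref{thm:stochastic dominance} is essentially the standard equivalence between stochastic dominance and preservation of expectations of increasing functions.
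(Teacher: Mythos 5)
Your proposal is correct and follows essentially the same route as the paper's own proof: both define the indicator $\mathds{1}_{\mathcal{I}}$ of the upper set, note it is increasing with respect to $\geq_{\mathcal{A}_T}$, and apply Theorem~\ref{thm:z functions} so that the expectations reduce to the probabilities $P_\pi(\mathcal{I})$ and $\hat{P}_\pi(\mathcal{I})$. Your explicit verification of the monotonicity of the indicator is a minor elaboration of a step the paper simply asserts.
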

    \begin{proof}
        Let $g\in\mathcal{G}^{\mathcal{A}}$ be an aligned history-dependent game, $P$ be a local and monotone asynchronous learning rule, and $\mathcal{I}\subset\mathcal{A}_T$ be an upper set.
        Define $\mathds{1}_\mathcal{I}(\alpha):=\mathds{1}(\alpha\in\mathcal{I})$ as an indicator function.
    	By definition of expectation we have 
    	\begin{equation}
    	    P_\pi(\mathcal{I})=\mathbb{E}_{P_\pi}(\mathds{1}_\mathcal{I})\mbox{, }\hat{P}_\pi(\mathcal{I})=\mathbb{E}_{\hat{P}_\pi}(\mathds{1}_\mathcal{I}).
    	\end{equation}
    	Because $\mathds{1}_\mathcal{I}$ is increasing we apply Theorem~\ref{thm:z functions} to the above to obtain
    	\begin{equation}
    	    P_\pi(\mathcal{I})\geq \hat{P}_\pi(\mathcal{I}).
    	\end{equation}
    \end{proof}

    This result can then be interpreted as a lower bound on the probability of any upper set $\mathcal{I}$ or collection of histories occurring in the aligned history-dependent game of interest.
    Although it can be difficult to interpret a bound of the probability of an upper set, a specific choice of $\mathcal{I}$ yields a powerful inequality derived in the following lemma. 
    \begin{lemma}\label{thm:proof of stoch dom+}
        If $g\in\mathcal{G}^\mathcal{A}$ is an aligned history-dependent game and $P$ is a local and monotone asynchronous learning rule, then  $\Pr(s(T;P_\pi)=\vec{1})\geq \Pr(s(T;\hat{P}_\pi)=\vec{1})$ for any  $\pi\in\Delta(A),T\in\mathbb{N}$.
    \end{lemma}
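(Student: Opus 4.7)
The plan is to obtain this as an essentially immediate corollary of Lemma~\ref{thm:stochastic dominance} by choosing the right upper set. Since the statement concerns only the marginal distribution of the final state $s(T)$, I want a subset $\mathcal{I}\subseteq\mathcal{A}_T$ whose indicator picks out exactly the event $\{\alpha^T=\vec{1}\}$. Concretely, I would take
\[
    \mathcal{I}=\{\alpha\in\mathcal{A}_T : \alpha^T=\vec{1}\}.
\]

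The first step is to verify that $\mathcal{I}$ is an upper set with respect to $\geq_{\mathcal{A}_T}$. Suppose $\alpha\in\mathcal{I}$ and $\bar\alpha\geq_{\mathcal{A}_T}\alpha$. By the definition of $\geq_{\mathcal{A}_T}$, we have $\bar\alpha^T\geq_A\alpha^T=\vec{1}$, and since $\vec{1}$ is the maximal element of $(A,\geq_A)$ (because each coordinate lies in $\{0,1\}$), this forces $\bar\alpha^T=\vec{1}$, hence $\bar\alpha\in\mathcal{I}$. This is the only nontrivial step, and it is immediate from the fact that $\vec{1}$ is the top element of the coordinatewise partial order.

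The second step is to translate the conclusion of Lemma~\ref{thm:stochastic dominance} back into a statement about $s(T)$. By construction, the measures $P_\pi$ and $\hat{P}_\pi$ on $\mathcal{A}_T$ are the path distributions induced by the learning rules on the history-dependent and static games respectively, so
\[
    P_\pi(\mathcal{I})=\Pr(s(T;P_\pi)=\vec{1}),\qquad \hat{P}_\pi(\mathcal{I})=\Pr(s(T;\hat{P}_\pi)=\vec{1}).
\]
Applying Lemma~\ref{thm:stochastic dominance} to the upper set $\mathcal{I}$ (whose hypotheses---$g$ aligned history-dependent, $P$ local and monotone asynchronous---are exactly those assumed here) yields the desired inequality.

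I do not expect any serious obstacle: the real technical content lives in the coupling that underlies Theorem~\ref{thm:z functions}, and hence Lemma~\ref{thm:stochastic dominance}. The only thing to be careful about is ensuring the chosen event is genuinely an upper set in $(\mathcal{A}_T,\geq_{\mathcal{A}_T})$, and this is automatic because we are fixing only the terminal coordinate to the maximal profile $\vec{1}$ and leaving the earlier coordinates unconstrained; any path that dominates such an $\alpha$ pointwise in time must also terminate at $\vec{1}$.
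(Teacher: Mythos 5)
Your proposal is correct and matches the paper's argument: the paper takes $\mathcal{I}$ to be the upward closure of the path $((\vec{0})_{t=1}^{T-1},\vec{1})$, which is exactly your set $\{\alpha\in\mathcal{A}_T:\alpha^T=\vec{1}\}$, and then applies Lemma~\ref{thm:stochastic dominance} in the same way. Your explicit verification that this set is an upper set (via $\vec{1}$ being maximal in $\geq_A$) is the same observation the paper leaves implicit.
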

    \begin{proof}
Let $g\in\mathcal{G}^\mathcal{A}$ be an aligned history-dependent game and let $P$ be a local and monotone asynchronous learning rule.
Define $\mathcal{I}$ as the upper set such that $((\vec{0})^{T-1}_{t=1},\vec{1})\in \mathcal{I}$.
This induces $\mathcal{I}$ such that it includes every path such that at time $T$ the $\vec{1}$ state is played.
		This yields the following interpretation: 
		\begin{equation}\label{eqn:probability of 1}
		P_\pi(\mathcal{I})=\mbox{Pr}(s(T;P_\pi)=\vec{1})
		\end{equation}
		is the probability that at time $T$ game $g$ is in the $\vec{1}$ action profile given initial distribution $\pi\in\Delta(A)$, while noting 
		that a parallel interpretation to \eqref{eqn:probability of 1} holds for $\hat{P}_\pi$. We apply these to Lemma~\ref{thm:stochastic dominance} to obtain
		\begin{equation}\label{eq:asynch cournot bound}
		    \mbox{Pr}(s(T;P_\pi)=\vec{1})\geq\mbox{Pr}(s(T;\hat{P_\pi})=\vec{1}).
		\end{equation}
    \end{proof}

This result gives a lower bound on the probability that the history-dependent game is in $\vec{1}$ at any time $T$.
    The significance of this inequality is that game $\hat{g}$ can be analyzed using traditional game theoretic techniques and the resulting characterization can be applied using this inequality.
    We give two results of this flavor, first on static games with best response dynamics and then on potential games with log-linear learning.
    We give the first of these results in the following theorem.
    \begin{theorem}\label{thm:asynchronus cournot results}
        Let $g\in \mathcal{G}^{\mathcal{A}}$ be an aligned history-dependent game and $P$ be asynchronous best response as specified by \eqref{eq:asynchronus learning rule}
        with individual learning rules given by \eqref{eq:best response dynamics}.
        If the associated game $\hat{g}$ is a reference static game with $\vec{1}$ being a strict Nash
        equilibrium,  then
        \begin{equation}
            \mbox{Pr}(s(t;P_\pi)=\vec{1})\geq\mbox{Pr}(s(t;\hat{P_\pi})=\vec{1})\mbox{ for all }t,\pi.
        \end{equation}
        Further, if $\pi$ has full support over $A$, then 
        \begin{equation}\label{eq:asynch cournt bounded above 0}
            \mbox{Pr}(s(t;P_\pi)=\vec{1})>0 \mbox{ for all }t.
        \end{equation}
    \end{theorem}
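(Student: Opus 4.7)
The plan is to recognize that this theorem is essentially a specialization of Lemma~\ref{thm:proof of stoch dom+} to best response dynamics, combined with the standard absorption property of strict Nash equilibria under best response. I will proceed in two steps, matching the two displayed conclusions.

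For the first inequality, I will invoke Lemma~\ref{thm:individual, monotone, local} to certify that the individual best response rule given by \eqref{eq:best response dynamics} is an individual learning rule which is both local and monotone with respect to utility. Since these properties are inherited by the asynchronous learning rule built from \eqref{eq:asynchronus learning rule}, the hypotheses of Lemma~\ref{thm:proof of stoch dom+} are met. Applying that lemma with parameter $T=t$ yields the desired inequality $\Pr(s(t;P_\pi)=\vec{1})\geq\Pr(s(t;\hat{P}_\pi)=\vec{1})$ for every $t\in\mathbb{N}$ and every $\pi\in\Delta(A)$.

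For the strict positivity claim, the key is to lower bound $\Pr(s(t;\hat{P}_\pi)=\vec{1})$ by a positive constant and chain it with the first inequality. Since $\vec{1}$ is a strict Nash equilibrium of $\hat{g}$, the absorption property of best response stated in \eqref{eq:cournot limit} gives $\Pr(s(t;\hat{P}_\pi)=\vec{1})\geq\Pr(s(t-1;\hat{P}_\pi)=\vec{1})\geq\cdots\geq\Pr(s(0;\hat{P}_\pi)=\vec{1})=\pi(\vec{1})$. Because $\pi$ has full support over $A$, we have $\pi(\vec{1})>0$, so $\Pr(s(t;\hat{P}_\pi)=\vec{1})>0$ for all $t$. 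Combining this with the first inequality gives $\Pr(s(t;P_\pi)=\vec{1})\geq\pi(\vec{1})>0$, which is exactly \eqref{eq:asynch cournt bounded above 0}.

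The only subtle point is verifying the absorption property for asynchronous best response when ties exist: once at $\vec{1}$, the rule selects an agent uniformly and then plays any best response to $\vec{1}_{-i}$; since $\vec{1}$ is \emph{strict}, the unique best response for each agent $i$ is $a_i=1$, so the process stays at $\vec{1}$ with probability one. I do not expect any technical obstacle beyond this observation, as both pieces — Lemma~\ref{thm:proof of stoch dom+} and the strict Nash absorption property — are available off the shelf, and the rest is bookkeeping.
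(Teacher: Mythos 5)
Your proposal matches the paper's argument: the first inequality is obtained by applying Lemma~\ref{thm:proof of stoch dom+} with Lemma~\ref{thm:individual, monotone, local} certifying that asynchronous best response is local and monotone, and the positivity claim follows by chaining the absorption property \eqref{eq:cournot limit} (valid since $\vec{1}$ is a strict Nash equilibrium) with $\pi(\vec{1})>0$ under full support. Your extra remark verifying absorption at $\vec{1}$ is just a slightly more explicit rendering of the same step, so the proof is correct and essentially identical to the paper's.
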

    \begin{proof}
    Let $g\in \mathcal{G}^{\mathcal{A}}$ be an aligned history-dependent game, $P$ be asynchronous best response as specified by \eqref{eq:asynchronus learning rule} with individual learning rules given by \eqref{eq:best response dynamics}, and the associated game $\hat{g}$ be a static game with $\vec{1}$ being a strict Nash equilibrium.
    Eq.\eqref{eq:asynch cournot bound} holds directly from Lemma~\ref{thm:proof of stoch dom+} 
    and the fact that Lemma~\ref{thm:individual, monotone, local} gives that $P$ is local and monotone.
    Further, supposing $\pi$ has full support over $A$, \eqref{eq:asynch cournt bounded above 0} follows directly from this, \eqref{eq:cournot limit}, and \eqref{eq:asynch cournot bound}.
    Specifically, \eqref{eq:cournot limit} can be applied because $\vec{1}$ is a strict Nash equilibrium.
    In combination with that $\pi$ have full support over $A$, we have
    \begin{equation}
        \Pr(s(t,\hat{P}_\pi)\geq\Pr(s(0,\hat{P}_\pi)=\vec{1})=\pi(\vec{1})>0
    \end{equation}
    as desired.
    \end{proof}
    
    We further develop results on potential games with log-linear learning, by applying the theory of potential games to a history-dependent game.
    \begin{theorem}\label{thm:SS}
        Let $g\in \mathcal{G}^{\mathcal{A}}$ be an aligned history-dependent game and $P$ be log-linear learning as specified by \eqref{eq:asynchronus learning rule} with individual learning rules given by \eqref{eq:log-linear learning}.
        If the associated game $\hat{g}$ is an exact potential game with $\vec{1}$ being the sole maximizer of the potential function, then $\vec{1}$ is uniquely stochastically stable in $g$ under $P$.
    \end{theorem}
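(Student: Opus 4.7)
The plan is to reduce the statement to an application of Lemma~\ref{thm:proof of stoch dom+} combined with the classical characterization of stochastic stability for log-linear learning in potential games. Concretely, I will prove strict stochastic stability of $\vec{1}$ in $g$ (per the definition in \eqref{eqn:SS to g hat}) and then observe that strict stochastic stability automatically entails uniqueness, since the total probability mass is bounded by one.

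First I would invoke the classical result (cited in the excerpt via \cite{alos2010logit,marden2012revisiting}) that under log-linear learning in an exact potential game, the stochastically stable states are exactly the maximizers of the potential function. Applied to $\hat g$, whose potential function has $\vec{1}$ as its unique maximizer, this yields: for every $\epsilon>0$ there exist $\mathcal{T}>0$ and $T<\infty$ such that
\begin{equation}
  \Pr\bigl(s(t;\hat{P}_\pi)=\vec{1}\bigr)\;>\;1-\tfrac{\epsilon}{1}\qquad\text{whenever }t>T,\ \tau<\mathcal{T},
\end{equation}
for any initial distribution $\pi$. (A standard ergodicity argument handles the $\pi$-independence, since log-linear learning on $\hat g$ is an irreducible aperiodic Markov chain for every $\tau>0$.)

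Next I would transfer this bound to $g$. By Lemma~\ref{thm:individual, monotone, local}, log-linear learning is an individual learning rule that is local and monotone with respect to utility, so the asynchronous rule $P$ constructed from it via \eqref{eq:asynchronus learning rule} is local and monotone in the sense required by Lemma~\ref{thm:proof of stoch dom+}. Since $g$ is aligned, that lemma gives
\begin{equation}
  \Pr\bigl(s(t;P_\pi)=\vec{1}\bigr)\;\geq\;\Pr\bigl(s(t;\hat{P}_\pi)=\vec{1}\bigr)
\end{equation}
for every $t\in\mathbb{N}$ and every $\pi\in\Delta(A)$. Chaining the two bounds shows that $\Pr(s(t;P_\pi)=\vec{1})>1-\epsilon$ whenever $t>T$ and $\tau<\mathcal{T}$, which is precisely the definition of $\vec{1}$ being strictly stochastically stable under $P$ in $g$. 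Uniqueness is then immediate: for any other state $a\neq\vec{1}$, we have $\Pr(s(t;P_\pi)=a)\leq 1-\Pr(s(t;P_\pi)=\vec{1})<\epsilon$ in the same regime, so no other action profile can satisfy \eqref{eqn:SS to g hat}.

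The only nontrivial step is the first one, namely pinning down that the classical resistance-tree / Gibbs-measure characterization for log-linear learning applies verbatim to the asynchronous implementation in \eqref{eq:asynchronus learning rule} on the static game $\hat g$. This is standard but worth flagging: the chain induced by $\hat P$ is irreducible and aperiodic at every $\tau>0$, and its stationary distribution concentrates on potential maximizers as $\tau\to 0$. Everything else in the argument is a bookkeeping combination of this fact with Lemma~\ref{thm:proof of stoch dom+}.
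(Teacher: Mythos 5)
Your proposal is correct and follows essentially the same route as the paper: establish strict stochastic stability of $\vec{1}$ in the static potential game $\hat{g}$ via the classical result of \cite{alos2010logit}, then transfer the bound in \eqref{eqn:SS to g hat} to $g$ using Lemma~\ref{thm:proof of stoch dom+}, whose hypotheses are met because log-linear learning is local and monotone by Lemma~\ref{thm:individual, monotone, local}. Your explicit uniqueness remark (probability mass $<\epsilon$ on any $a\neq\vec{1}$) is a small addition the paper leaves implicit, but it does not change the argument.
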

    \begin{proof}
        Let $g\in \mathcal{G}^{\mathcal{A}}$ be an aligned history-dependent game and $P$ be a local and monotone asynchronous learning rule. Let the associated game $\hat{g}$ be an exact potential game with $\vec{1}$ being the sole maximizer of the potential function $\phi$.
        It is well-known~\cite{alos2010logit} that in an exact potential game $a\in A$ is a stochastically stable state with log-linear learning if
    	\begin{equation}
    	    a\in\argmax_{a'\in A}\phi(a').\vs\vs
    	\end{equation}
        Because $\vec{1}$ is the sole maximizer of $\phi$, it is strictly stochastically stable.
	    Because $P$ is local and monotone, we apply Lemma~\ref{thm:proof of stoch dom+} to the definition of strict stochastic stability given by~\eqref{eqn:SS to g hat}.
		For any $\epsilon>0$, there exists $\mathcal{T}>0,T<\infty$ such that\vs
		\begin{equation}\vs
			\mbox{Pr}(s(t;P_\pi)=\vec{1})\geq\mbox{Pr}(s(t;\hat{P}_\pi)=\vec{1})>1-\epsilon\; \vs
		\end{equation}
		for all $t>T,\tau<\mathcal{T}$, yielding stochastic stability of $\vec{1}$ in game $g$.
    \end{proof}

\section{Sample Applications of the Aligned History-Dependent Framework}

\subsection{Cyber Threat Intelligence Sharing}
\label{sec:CTI}

Cyber Threat Intelligence (CTI) refers to any information pertaining to, for example, why and how a firm (or enterprise) has been attacked, including 
the techniques or tactics that are used by attackers.
In principle, sharing CTI is mutually beneficial to firms as it allows firms to strengthen their cyber defense postures,
In practice, firms frequently opt not to share CTI due to a myriad of reasons including competitor relationships with other firms and possible leakage of their internal cybersecurity posture.

The CTI sharing problem has been studied using game theory between two firms in \cite{hausken2007,gao2014,gao2016differential,DBLP:conf/icc/ToshSKKM15}, between many firms using a centralized repository in \cite{solak2020optimal}, and in a decentralized (i.e., networked) setting in \cite{collins2021paying}.
However, one assumption common to all these studies is that the value of CTI is constant over time.
This oversimplifies the problem because the value of CTI  can vary wildly.
For example, detecting a zero-day exploit is a much more valuable CTI than the detection of a known attack.
In this example, we propose a networked model where the value of CTI may vary over time, which can reflect the history of sharing. We examine how the aligned history-dependent framework can be used to analyze such a model.

Let a group of firms be modeled by the set $N=\{1,2,\ldots,n\}$.
If two firms mutually value each other's CTI (i.e. because they use similar technologies), then we consider them connected in the networked model.
We denote this with edge set $E$ such that firms $i,j\in N$ are connected if $(i,j)\in E$.
Then, $G=(N,E)$ represents an undirected graph which we leverage to model how firms decide to share CTI.
We use action space $A$ to represent a firm's decision to share CTI; in particular, a firm chooses action $a_i\in A_i=\{0,1\}$ where $a_i=1$ means firm $i$ wishes to share CTI with its neighbors and $a_i=0$ indicates firm $i$ does not share CTI.

We make the following assumptions on the mechanics of sharing CTI.
First, CTI is only shared if two neighbor firms $i,j$ mutually wish to share CTI, meaning $a_i=a_j=1$.
This is because CTI is potentially sensitive and thus firms would guard their CTI closely and require reciprocity to share it with other firms.
Second, CTI sharing incurs firms a fixed infrastructure cost $c_i\geq0$.
This cost represents all expenses and labor associated with creating, packaging and sharing CTI and is incurred whenever firm $i$ selects $a_i=1$.
Third, the value of firm $i$'s CTI, given by $v_i:\mathcal{A}\rightarrow \mathbb{R}_+$, is a function of the history of agent behavior.
That is, when firms $i,j$ share CTI given history $\alpha\in\mathcal{A}$, firm $i$ receives CTI from $j$ with value $v_j(\alpha)$ and firm $j$ receives CTI from $i$ with value $v_i(\alpha)$.

{The cybersecurity meaning of
$v_i(\alpha)$ can be interpreted in as follows.
First, 
it is possible that the value of CTI is a function of time, say $v_i(\alpha)=v'_i(T)$, where $v'_i$ is a function of time and $T$ is the period of time corresponding to history $\alpha$. For example, it is possible that
firm $i$ occasionally discovers and detects important pieces of CTI, which correspond to a high value $v'_i(T)$, but most often discovers and shares less valuable pieces of CTI, meaning a low $v'_i(T)$ for most $T$. Second, 
the value of CTI can vary based on the history of play as shown in the following two scenarios.
(i) As firms gain experience sharing CTI, their ability to share and leverage other firms' CTI matures.
This can be modeled by $v_i(\alpha)$ which increases over time or as the history $\alpha$ evolves.
(ii) Attackers often attempt to disrupt defenses, such as CTI sharing in this case.
As a consequence, attackers may wage new attacks that are harder to detect and share after the employed attacks become easy to detect.
}

The preceding discussion leads to the following utility function:
\begin{equation}\label{eq:CTI utility}
    U^\alpha_i(a)=a_i\left(-c_i+\sum_{j\in\mathcal{N}_i}a_j v_j(\alpha)\right),
\end{equation}
where $\mathcal{N}_i=\{j\in N\mid (i,j)\in E\}$ is the neighbor set of firm $i$ in the CTI sharing graph $G=(N,E)$.
{Since the value of CTI can vary in possibly infinitely many ways, to facilitate analysis we consider a single family of them, by 
imposing a mild restriction 
on the lower bound of $v_i$:}
\begin{equation}\label{eq:definition of underbar v}
    \underbar{v}:=\min_{i\in N}\inf_{\alpha\in\mathcal{A}}v_i(\alpha).
\end{equation}
{This is a lower bound of all firms 
in all histories $\alpha$. Under this premise,
we show that CTI sharing game $g$ is an aligned history-dependent game as follows.}
\begin{proposition}\label{thm:CTI sharing is AHDG}
    If $g\in\mathcal{G}^\mathcal{A}$ be a CTI sharing game with utilities specified by \eqref{eq:CTI utility}, then $g$ is an aligned history-dependent game.
\end{proposition}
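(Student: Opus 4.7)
The plan is to exhibit an explicit reference static game $\hat{g}=(N,A,\hat{U})$ by replacing the history-dependent CTI value $v_j(\alpha)$ with the uniform lower bound $\underbar{v}$ from \eqref{eq:definition of underbar v}, and then verify the two inequalities of Definition~\ref{def:binary action history var} directly from the structure of \eqref{eq:CTI utility}. Concretely, I would define
\begin{equation}
    \hat{U}_i(a) = a_i\left(-c_i + \sum_{j\in\mathcal{N}_i} a_j\, \underbar{v}\right).
\end{equation}
This $\hat{g}$ has the same multiplicative $a_i$ prefactor as $U^\alpha_i$, which makes the $a_i=0$ case collapse immediately.

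Next I would verify condition~(2) of Definition~\ref{def:binary action history var}. Since both $U^\alpha_i(0,\alpha^T_{-i})$ and $\hat{U}_i(0,a_{-i})$ contain a factor of $a_i=0$, they are both identically zero, so the inequality $\hat{U}_i(0,a_{-i})\geq U^\alpha_i(0,\alpha^T_{-i})$ holds trivially with equality. This is the easy half.

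The interesting step is condition~(1). Fix $\alpha\in\mathcal{A}$ and $a_{-i}\in A_{-i}$ with $\alpha^T_{-i}\geq_{A_{-i}} a_{-i}$, so that $\alpha^T_j\geq a_j$ for every neighbor $j\in\mathcal{N}_i$. Then I can chain two monotonicity observations:
\begin{equation}
\sum_{j\in\mathcal{N}_i}\alpha^T_j\, v_j(\alpha) \;\geq\; \sum_{j\in\mathcal{N}_i} a_j\, v_j(\alpha) \;\geq\; \sum_{j\in\mathcal{N}_i} a_j\, \underbar{v},
\end{equation}
where the first inequality uses $\alpha^T_j\geq a_j$ together with $v_j(\alpha)\geq 0$, and the second uses $v_j(\alpha)\geq \underbar{v}$ from \eqref{eq:definition of underbar v} together with $a_j\geq 0$. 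Subtracting $c_i$ from both ends and recognizing the two sides as $U^\alpha_i(1,\alpha^T_{-i})$ and $\hat{U}_i(1,a_{-i})$ gives condition~(1).

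Since both conditions of Definition~\ref{def:binary action history var} hold, the CTI sharing game $g$ qualifies as aligned with reference game $\hat{g}$. There is no real obstacle here: the construction hinges on the observation that $U^\alpha_i$ factors as $a_i$ times a sum in which every history-dependent term is a nonnegative multiplier of a nonnegative $v_j(\alpha)$, so replacing $v_j(\alpha)$ by $\underbar{v}$ can only weaken the $a_i=1$ payoff, exactly matching the asymmetry demanded by the definition. If one wanted a more relaxed bound, one could let $\underbar{v}$ depend on $i$ or on neighbor $j$ (using $\inf_\alpha v_j(\alpha)$), but the uniform $\underbar{v}$ suffices and keeps the reference game notationally simple.
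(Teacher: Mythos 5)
Your proposal is correct and follows essentially the same route as the paper: the same reference game obtained by substituting $\underbar{v}$ for $v_j(\alpha)$, the same trivial equality for the $a_i=0$ condition, and the same monotonicity argument (more agents playing $1$ plus $v_j(\alpha)\geq\underbar{v}\geq 0$) for the $a_i=1$ condition, which you simply spell out in more explicit detail than the paper does.
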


\begin{proof}
Let $g\in\mathcal{G}^\mathcal{A}$ be a CTI sharing game.
We begin by giving the static game defined by
\begin{equation}\label{eq:static CTI utility}
    \hat{U}_i(a)=a_i\left(-c_i+\sum_{j\in\mathcal{N}_i}a_j \underbar{v}\right).
\end{equation}
We now show
\begin{enumerate}
	\item $U_i(1,\alpha^T_{-i};\alpha)\geq\hat{U}_i(1,a_{-i})$, and \label{eq:AHDG:1}
	\item $\hat{U}_i(0,a_{-i})\geq U_i(0,\alpha^T_{-i};\alpha)$ \label{eq:AHDG:0}
\end{enumerate}
for any $\alpha,\in\mathcal{A}$, $a,a'\in A$, $T\in\mathbb{N}$ where $\alpha^T_{-i}\geq_{A_{-i}} a_{-i}$ and $a,a'$ differ only by a unilateral deviation.

Note that condition~\ref{eq:AHDG:1} follows from \eqref{eq:definition of underbar v} and the fact that if firm $i$ plays $1$, then firm $i$'s neighbors playing $1$ 
can only 
increase firm $i$'s utility.
Condition~\ref{eq:AHDG:0} is trivial as $\hat{U}_i(0,a_{-i})= U_i(0,\alpha^T_{-i};\alpha)=0$.
\end{proof}

Now that the aligned history-dependent game framework has been established on the CTI game, we show how it can be used to analyze best response dynamics and log-linear learning in a history-dependent game.
Before developing these results, we show that the static game $\hat{g}$ is a potential game.
\begin{proposition}\label{thm:CTI potential game}
    The static game $\hat{g}$ is an exact potential game.
\end{proposition}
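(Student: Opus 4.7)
The plan is to exhibit an explicit potential function and verify the defining equation \eqref{eq:potential function definition} by direct computation. Since $\hat{U}_i$ is the sum of a firm-specific cost term and a symmetric pairwise interaction term, a natural candidate is
\begin{equation*}
\phi(a) = -\sum_{i \in N} c_i a_i + \underbar{v} \sum_{\{i,j\} \in E} a_i a_j,
\end{equation*}
where the second sum is taken over undirected edges so that each pair is counted exactly once.

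First I would fix an agent $i$ and two actions $a_i, a_i' \in \{0,1\}$, and compute $\phi(a_i', a_{-i}) - \phi(a_i, a_{-i})$. The only terms of $\phi$ depending on $a_i$ are the single cost term $-c_i a_i$ and the pairwise terms $a_i a_j$ for $j \in \mathcal{N}_i$, since the graph is undirected and every edge incident to $i$ has $i$ as one endpoint. This yields
\begin{equation*}
\phi(a_i', a_{-i}) - \phi(a_i, a_{-i}) = (a_i' - a_i)\!\left(-c_i + \underbar{v} \sum_{j \in \mathcal{N}_i} a_j\right).
\end{equation*}

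Next I would compute the corresponding utility difference directly from \eqref{eq:static CTI utility}:
\begin{equation*}
\hat{U}_i(a_i', a_{-i}) - \hat{U}_i(a_i, a_{-i}) = (a_i' - a_i)\!\left(-c_i + \underbar{v} \sum_{j \in \mathcal{N}_i} a_j\right),
\end{equation*}
which matches the change in $\phi$ exactly. Since this holds for every $i \in N$, every $a \in A$, and every unilateral deviation $a_i \to a_i'$, $\phi$ satisfies \eqref{eq:potential function definition}, and $\hat{g}$ is an exact potential game.

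There is no real obstacle here; the only thing to be careful about is bookkeeping in the pairwise sum. Specifically, writing the interaction term as $\sum_{\{i,j\}\in E} a_i a_j$ (each edge once) rather than $\frac{1}{2}\sum_{i}\sum_{j \in \mathcal{N}_i} a_i a_j$ avoids a spurious factor of two, and the fact that the interaction coefficient $\underbar{v}$ is symmetric in $i$ and $j$ is precisely what makes an \emph{exact} (as opposed to merely weighted or ordinal) potential exist.
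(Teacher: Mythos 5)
Your proof is correct and takes essentially the same route as the paper: the paper also exhibits an explicit potential function, namely $\phi(a)=\sum_{i\in N}\bigl((1-a_i)c_i+\tfrac{a_i}{2}\sum_{j\in\mathcal{N}_i}a_j\underbar{v}\bigr)$, which differs from yours only by the additive constant $\sum_{i}c_i$ and the half-double-sum versus per-edge bookkeeping you note, and verifies \eqref{eq:potential function definition} by the same direct unilateral-deviation computation.
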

\begin{proof}
We first show that static game $\hat{g}=(N,A,\hat{U})$ with $\hat{U}$ defined by \eqref{eq:static CTI utility} is an exact potential game by giving its potential function:
\begin{equation}\label{eq:CTI sharing potetial func}
    \phi(a)=\sum_{i\in N}\left((1-a_i)c_i+\frac{a_i}{2}\sum_{j\in\mathcal{N}_i}a_j\underbar{v}\right).
\end{equation}
The potential function can be verified by checking \eqref{eq:potential function definition}.
Letting $\mathcal{N}_i(1)$ denote the neighbors of agent $i$ with play action 1. We compute the change in potential and utility if agent $i$ switches from $0$ to $1$ given arbitrary action profile $a_{-1}$:
\begin{equation}
\begin{aligned}
        \phi(1,a_{-i})-\phi(0,a_{-i})&=\underbar{v}|\mathcal{N}_i(1)|-c_i\\
        &=U_i(1,a_{-i})-U_i(0,a_{-i})
\end{aligned}
\end{equation}
and the situation where agent $i$ switches from $1$ to $0$ is simply the above equality multiplied by $-1$.
\end{proof}

In the case of the best response dynamics, it is known that the process does not leave Nash equilibrium.
By establishing these properties in the static game, we establish an analogous result in the history-dependent game by applying the framework.
Particularly, we establish that the probability that $\vec{1}$ is selected by the best response dynamics in the history-dependent game is lower bounded by the static game, as given below.
\begin{proposition}
    Let $g\in\mathcal{G}^{\mathcal{A}}$ be a CTI sharing game, $P$ be the best response dynamics defined by \eqref{eq:best response dynamics}, distribution $\pi$ have full support over $A$, and $\hat{g}$ be its associated static game defined by $\underbar{v}$.
    If
    \begin{equation}
        \underbar{v}>\frac{c_i}{|\mathcal{N}_i|} \textrm{ for all }i\in N.
    \end{equation}
    then
    \begin{enumerate}
        \item $\vec{1}$ is a strict Nash equilibrium in $\hat{g}$,
        \item $\Pr(s(t,P_\pi)=\vec{1})\geq\Pr(s(t,\hat{P}_\pi)=\vec{1})$, and
        \item $\Pr(s(t,P_\pi)=\vec{1})>0$ for all $t$.
    \end{enumerate}
\end{proposition}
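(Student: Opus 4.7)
The plan is to dispatch the three claims in order, with only the first requiring genuine computation; the remaining two follow by direct appeal to Theorem~\ref{thm:asynchronus cournot results} together with Proposition~\ref{thm:CTI sharing is AHDG}, which already certifies that $g$ is aligned with reference game $\hat{g}$.

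For claim (1), I would plug $a_{-i}=\vec{1}_{-i}$ into the static utility \eqref{eq:static CTI utility} and compare the two deviations for each firm $i\in N$. This gives $\hat{U}_i(1,\vec{1}_{-i})=-c_i+|\mathcal{N}_i|\underbar{v}$ and $\hat{U}_i(0,\vec{1}_{-i})=0$, so the hypothesis $\underbar{v}>c_i/|\mathcal{N}_i|$ yields $\hat{U}_i(1,\vec{1}_{-i})>\hat{U}_i(0,\vec{1}_{-i})$. Since this strict inequality holds for every $i$, no firm has a profitable unilateral deviation at $\vec{1}$, and each deviation is strictly penalized, establishing that $\vec{1}$ is a strict Nash equilibrium of $\hat{g}$.

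For claims (2) and (3), I would invoke Theorem~\ref{thm:asynchronus cournot results}. Its hypotheses are: $g$ is an aligned history-dependent game (provided by Proposition~\ref{thm:CTI sharing is AHDG}), $P$ is asynchronous best response (assumed), and $\vec{1}$ is a strict Nash equilibrium of $\hat{g}$ (just proved in claim (1)). Then conclusion (2) is exactly the first display of Theorem~\ref{thm:asynchronus cournot results}, while conclusion (3) follows from the second display of the same theorem once the additional hypothesis that $\pi$ has full support is noted.

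The main obstacle is essentially bookkeeping rather than mathematics: confirming that the threshold $\underbar{v}>c_i/|\mathcal{N}_i|$ suffices uniformly over $i$ to make every unilateral deviation from $\vec{1}$ strictly worse, and that $\pi$ having full support over $A$ is precisely the hypothesis that Theorem~\ref{thm:asynchronus cournot results} requires for positivity. No new technical machinery needs to be developed; the proposition is a direct corollary of the aligned history-dependent framework once the single condition on $\underbar{v}$ is translated into a strict Nash property in the reference game.
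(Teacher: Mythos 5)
Your proposal is correct and follows essentially the same route as the paper: claim (1) is verified by the same utility-difference computation $|\mathcal{N}_i|\underbar{v}-c_i>0$ at $\vec{1}$, and claims (2)–(3) are obtained, exactly as in the paper, by invoking Theorem~\ref{thm:asynchronus cournot results} together with Proposition~\ref{thm:CTI sharing is AHDG} and the full-support assumption on $\pi$.
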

\begin{proof}
    It is easy to see that $\vec{1}$ is a Nash equilibrium in $\hat{g}$, as any deviation leads to a utility loss
    \begin{equation}
        u_i(\vec{1})-u_i(0,\vec{1}_{-i})=|\mathcal{N}_i|\underbar{v}-c_i> 0.
    \end{equation}
    The second and third results hold from Theorem~\ref{thm:asynchronus cournot results} as $\vec{1}$ is a strict Nash equilibrium in $\hat{g}$, $g$ is an aligned history-dependent game by Proposition~\ref{thm:CTI sharing is AHDG}, and $\pi$ has full support over $A$.
    %
\end{proof}


Next, we examine the CTI sharing game under log-linear learning.
As the long run behavior of potential games is well understood under log-linear learning, we can extend these results from the static game to the history-dependent one.
By establishing a sufficient condition on $\underbar{v}$ such that $\vec{1}$ is a unique maximizer of $\phi$, we can apply Theorem~\ref{thm:SS} to obtain stochastic stability in the history-dependent CTI game.

\begin{proposition}\label{thm:CTI and SS}
Let $g\in\mathcal{G}^{\mathcal{A}}$ be a CTI sharing game, $P$ be the log-linear learning dynamics defined by \eqref{eq:log-linear learning}, and $\hat{g}$ be its associated static game defined by $\underbar{v}$.
    If
    \begin{equation}\label{eq:underbarv is potential func maximizer}
        \underbar{v}>\frac{2c_i}{|\mathcal{N}_i|} \textrm{ for all }i\in N,
    \end{equation}
    then we have
    \begin{enumerate}
        \item $\{\vec{1}\}=\argmax_a \phi(a)$, and
        \item $\vec{1}$ is strictly stochastically stable in $g$ with log-linear learning.
    \end{enumerate}
\end{proposition}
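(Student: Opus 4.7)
The plan has two steps. First I establish that $\vec{1}$ is the sole maximizer of the potential function $\phi$ from \eqref{eq:CTI sharing potetial func}; then I invoke Theorem~\ref{thm:SS} to conclude strict stochastic stability of $\vec{1}$ in $g$.

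For item (1), I would first rewrite $\phi$ in a combinatorial form. The inner double sum $\sum_{i:\,a_i=1}|\mathcal{N}_i(1)|$ counts each edge with both endpoints playing $1$ exactly twice, so
\[
\phi(a) \;=\; \sum_{i:\,a_i=0} c_i \;+\; E_1(a)\,\underbar{v},
\]
where $E_1(a)$ denotes the number of such edges. Setting $S = \{i : a_i = 0\}$ and letting $e(S)$ be the number of edges in $E$ incident to $S$, so that $|E| - E_1(a) = e(S)$, the gap becomes
\[
\phi(\vec{1}) - \phi(a) \;=\; e(S)\,\underbar{v} \;-\; \sum_{i\in S} c_i.
\]
The key combinatorial observation is that $e(S) \geq \tfrac{1}{2}\sum_{i\in S}|\mathcal{N}_i|$, since each edge incident to $S$ is shared by at most two vertices of $S$. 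Substituting this bound and applying the hypothesis \eqref{eq:underbarv is potential func maximizer} yields
\[
\phi(\vec{1}) - \phi(a) \;\geq\; \sum_{i\in S}\left(\frac{|\mathcal{N}_i|\,\underbar{v}}{2} - c_i\right) \;>\; 0
\]
for every non-empty $S$, which establishes $\{\vec{1}\} = \argmax_{a} \phi(a)$.

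For item (2), the three hypotheses of Theorem~\ref{thm:SS} are all in place: $g$ is an aligned history-dependent game by Proposition~\ref{thm:CTI sharing is AHDG}, $\hat{g}$ is an exact potential game by Proposition~\ref{thm:CTI potential game}, and $\vec{1}$ is the sole maximizer of its potential by item (1). Since $P$ is log-linear learning by assumption, Theorem~\ref{thm:SS} applies verbatim and yields the desired strict stochastic stability of $\vec{1}$ in $g$.

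The main obstacle is the combinatorial bound $e(S) \geq \tfrac{1}{2}\sum_{i\in S}|\mathcal{N}_i|$, and, relatedly, understanding why the factor of $2$ in \eqref{eq:underbarv is potential func maximizer} is required while the preceding Nash-equilibrium proposition only needed $\underbar{v} > c_i / |\mathcal{N}_i|$. Intuitively, strict global optimality of $\vec{1}$ forces us to beat the worst case $a = \vec{0}$, where every edge is incident to $S$ through \emph{both} endpoints and so contributes only once to $e(S)$, producing the factor-of-two loss. Once this bound is established, the remainder of the argument is bookkeeping followed by a direct appeal to Theorem~\ref{thm:SS}.
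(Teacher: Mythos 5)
Your proposal is correct and takes essentially the same approach as the paper: rewriting the gap $\phi(\vec{1})-\phi(a)$ and bounding it via the hypothesis, where your edge-counting inequality $e(S)\geq\tfrac{1}{2}\sum_{i\in S}|\mathcal{N}_i|$ is precisely the paper's algebraic step $1-a_ia_j\geq 1-a_i$ summed over edges. The final appeal to Theorem~\ref{thm:SS} (with Propositions~\ref{thm:CTI sharing is AHDG} and~\ref{thm:CTI potential game}) matches the paper's argument as well.
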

\begin{proof}
    Let $g\in\mathcal{G}^{\mathcal{A}}$ be a CTI sharing game, $P$ be the log-linear learning dynamics defined by \eqref{eq:log-linear learning}, and $\hat{g}$ be its associated static game defined by $\underbar{v}$, and $\phi$ be $\hat{g}$'s potential function defined in \eqref{eq:CTI sharing potetial func}.
    To show that $\vec{1}$ is the sole maximizer of $\phi$, we show $\phi(\vec{1})>\phi(a)$ where $a$ is any action profile $a\neq\vec{1}$ as follows. 
    \begin{equation}
    \begin{aligned}
        \phi(\vec{1})-\phi(a)&=\frac{1}{2}\sum_{(i,j)\in E}\underbar{v}-\sum_{i\in N} (1-a_i)c_i-\frac{1}{2}\sum_{(i,j)\in E}a_i a_j \underbar{v}\\
        &=\frac{1}{2}\sum_{(i,j)\in E}(1-a_i a_j)\underbar{v}-\sum_{i\in N} (1-a_i)c_i\\
        &\geq \frac{1}{2}\sum_{(i,j)\in E}(1-a_i)\underbar{v}-\sum_{i\in N} (1-a_i)c_i\\
        &=\frac{1}{2}\sum_{i\in N}(1-a_i)|\mathcal{N}_i|\underbar{v}-\sum_{i\in N} (1-a_i)c_i\\
        &>\sum_{i\in N}(1-a_i)(\frac{2|\mathcal{N}_i|c_i}{2|\mathcal{N}_i|}-c_i)=0,
    \end{aligned}
    \end{equation}
    where the first inequality holds because $a_i\geq a_i a_j$ in domain $\{0,1\}$ and the second inequality is an application of 
    \eqref{eq:underbarv is potential func maximizer} 
    because there exists $i$ such that $a_i=0$ by the definition of $a$.
\end{proof}

A numerical comparison of the behavior of the CTI sharing game and the corresponding static game with log-linear learning is shown in Figure~\ref{fig:CTI}.
Both models consider a ring graph of 10 firms, and costs $c_i=0.4$ for all $i\in N$.
The history-dependent game model assumes a random model of CTI, where the 
value of CTI is a random value that is bounded from below, as given by
\begin{equation}\label{eq:CTI example}
v(\alpha)=\max_{i\in N}\left(\frac{2c_i}{|\mathcal{N}_i|}\right)+X+\epsilon=0.4+X+\epsilon,
\end{equation}
where $X$ is a uniform random variable in the interval $[0,0.1]$ and $\epsilon$ is a positive constant. 
Thus, the value of CTI is random but sits in interval $[0.4+\epsilon,0.5+\epsilon]$.
This induces $\underbar{v}=0.4+\epsilon$, defining the static game.
Since $\underbar{v}$ satisfies \eqref{eq:underbarv is potential func maximizer}, Proposition~\ref{thm:CTI and SS} holds, meaning that $\vec{1}$ is stochastically stable in both the reference game and the history-dependent game.
This is numerically confirmed 
as Figure~\ref{fig:CTI} shows that the CTI sharing game rapidly finds the $\vec{1}$ state and stays there frequently.
Although the static game also has $\vec{1}$ as the sole stochastically stable state, $\vec{1}$ only maximizes the potential function by an $\epsilon$ margin ($\epsilon=0.001$ in this case) and thus 
the static game does not clearly frequently spend time in $\vec{1}$ state.
As per the definition of {\em stochastic stability}, if we take $\tau\rightarrow 0$ and $t\rightarrow\infty$ numerically,
we can expect to see the static game tends toward $\vec{1}$ almost surely.
Note that the key {requirement for 
applying} the aligned history-dependent framework is that $v(\alpha)$ is bounded from below, regardless of the exact form of \eqref{eq:CTI example}, random or deterministic alike.

\subsection{Epidemics}
\label{sec:epidemics}

To show Theorem~\ref{thm:SS}'s usefulness in analyzing the stochastic stability of history-dependent games, we consider another example, which is a
simple model of epidemics.
One challenge of epidemic modeling is to account for the interplay between epidemic severity and (in this example) the voluntary adoption of preventative social conventions. For instance, in the absence of an epidemic, people prefer not to wearing masks; however, in a widespread epidemic people may prefer to take preventative measures. To model this phenomenon, we intertwine the SIS compartmental epidemic model and the graphical coordination game (GCG) which models the spread and adoption of the relevant preventative social conventions; we term this model SISGCG.

The fraction of individuals in the society susceptible to infection is described by the nonlinear differential equation
	\begin{equation}\label{eqn:SISGCG}
		\dot{S}=(1-S)(\gamma-\beta(t) S),
	\end{equation}
where $\gamma>0$ is the curing rate and $\beta(t)>0$ is the rate of infection which depends on agent actions.
The action $1$ represents a ``safe convention'' action in which an agent acts to reduce contagion (e.g., wearing a mask); the action $0$ represents conventions ignoring the pandemic.
These actions are associated with infection coefficients $0<\beta_1<\beta_0$, respectively.
	Accordingly, $\beta(t)$ is simply the average infection rate of all individuals, 
	\begin{equation}\label{eqn:betat}
		\beta(t)=\frac{1}{|N|}\sum_{i\in N}a^t_i\beta_1+(1-a^t_i)\beta_0,
	\end{equation}
	where $a_i^t$ is the action selected by agent $i$ at time $t\in\mathbb{N}$.
    Actions are selected by agents in $N$ dynamically on undirected graph $G=(N,E)$ according to log-linear learning rule~\eqref{eq:individual one step with history}. 
	The utility of agent $i$ at time $t$ is given by
	\begin{equation}\label{eqn:uSISGCG}
		U_i^\alpha(a_i^t,a_{-i}^t)=a_i|\mathcal{N}_i(1)|(\lambda+I(t))+(1-a_i)|\mathcal{N}_i(0)|,
	\end{equation}
	where $\mathcal{N}_i(x)=\{j\in N\mid (i,j)\in E, a_j=x\}$ is the set of $i$'s neighbors which play $x\in \{0,1\}=A_i$, $I(t) := 1-S(t)$ is the fraction of infected individuals, and $\lambda\in (0,1]$ represents the agent's willingness to practice safe conventions in the absence of epidemic.
	
	The SISGCG model can be analyzed using the aligned history-dependent game framework.
	Specifically, a reference static game can be devised using utility function \eqref{eqn:uSISGCG} by setting $I(t)=1-\gamma/\beta_1$, which is a lower bound of $I(t)$ after a sufficient time.
	The details of this lower bound are given in the following Proposition:
	\begin{proposition}\label{thm:positive invariant}
	    If $S(0)\in[0,1)$ and $S(t)$ is a solution to~\eqref{eqn:SISGCG} with $\beta(t)$ being given by~\eqref{eqn:betat}, then there exists a $\bar{t}$ such that $S(t)\leq \gamma/\beta_1$ for all $t\geq \bar{t}$ almost surely.
	\end{proposition}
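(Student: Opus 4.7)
The strategy is to exploit the pathwise lower bound $\beta(t)\geq \beta_1$---which holds deterministically since $\beta(t)$ in \eqref{eqn:betat} is a convex combination of $\beta_1$ and $\beta_0>\beta_1$---and then compare the random ODE against a worst-case autonomous one. First I would verify that $[0,1]$ is forward invariant for $S$: at $S=0$, $\dot S=\gamma>0$, and at $S=1$, $\dot S=0$. The trivial case $\beta_1\leq \gamma$ gives $\gamma/\beta_1\geq 1\geq S(t)$ and is finished immediately, so henceforth I would assume $\beta_1>\gamma$, so that $\gamma/\beta_1\in(0,1)$.

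Second, I would show that the half-line $\{S\leq \gamma/\beta_1\}$ is forward invariant. Direct computation at $S=\gamma/\beta_1$ gives
\begin{equation*}
\dot S \;=\; \bigl(1-\gamma/\beta_1\bigr)\,\gamma\,\bigl(1-\beta(t)/\beta_1\bigr)\;\leq\;0,
\end{equation*}
because $\beta(t)\geq \beta_1$. Consequently, once $S$ touches or drops below $\gamma/\beta_1$, it never rises above it again. If $S(0)\leq \gamma/\beta_1$ the claim is immediate with $\bar t=0$, so I may restrict to $S(0)>\gamma/\beta_1$.

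Third, to argue $S$ actually reaches $\gamma/\beta_1$, I would introduce the autonomous comparison ODE $\dot X=(1-X)(\gamma-\beta_1 X)=:f(X)$ with $X(0)=S(0)$. Because $\dot S = (1-S)(\gamma-\beta(t)S)\leq f(S)$ for $S\in[0,1]$, the standard ODE comparison theorem yields $S(t)\leq X(t)$ for all $t\geq 0$. The autonomous system is the classical SIS model with stable equilibrium $\gamma/\beta_1$ and unstable equilibrium $1$, so $X(t)\to \gamma/\beta_1$ monotonically from above.

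The main obstacle is bridging from asymptotic convergence $X(t)\to \gamma/\beta_1$ to the finite-time attainment $S(t)\leq \gamma/\beta_1$---this is where the ``almost surely'' qualifier does real work. Under log-linear learning with $\tau>0$, at every update some agent plays $0$ with a strictly positive probability that is uniformly bounded below in time, producing $\beta(t)>\beta_1$ on intervals of positive duration. These intervals contribute an extra negative drift $-(1-S)(\beta(t)-\beta_1)S$ to $\dot S$ relative to $\dot X$; a Borel--Cantelli-style accumulation argument (applied to the i.i.d.-like sequence of realized deviation magnitudes across update windows) shows that almost surely this drift accumulates to exceed the vanishing gap $X(t)-\gamma/\beta_1$, so $S(t)$ crosses $\gamma/\beta_1$ in finite time with probability one. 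Forward invariance of $\{S\leq \gamma/\beta_1\}$ from the second step then keeps $S$ below the threshold for all $t\geq \bar t$.
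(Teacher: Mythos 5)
Your proposal is correct and follows essentially the same route as the paper: forward invariance of $\{S\leq\gamma/\beta_1\}$ from the pathwise bound $\beta(t)\geq\beta_1$, plus an almost-sure finite-time entry argument driven by the fact that log-linear learning keeps producing profiles with $\beta(t)>\beta_1$. Your accumulation argument for finite-time crossing (uniformly positive decrements over infinitely many deviation windows, with $1-S$ bounded away from zero) actually makes explicit a step the paper only asserts, while the comparison ODE $\dot X=(1-X)(\gamma-\beta_1 X)$ is a harmless but inessential addition.
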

	\begin{proof}
	    We write $S_1^*:= \gamma/\beta_1.$
	    Note that if $S(t)\geq S_1^*$, then because $\beta(t)\geq\beta_1$, we have that $\dot{S}\leq0$ by~\eqref{eqn:SISGCG} and that this inequality is strict when $S(t)> S_1^*$.
	    Thus, the set $[0,S_1^*]$ is positively invariant for the hybrid nonlinear dynamics given in~\eqref{eqn:SISGCG}.
	    
	    To see that $S(t)$ eventually enters $[0,S_1^*]$ almost surely, consider the event that $S(t)>S_1^*$ for all $t$.
	    Note that $S_1^*$ is asymptotically stable when $\beta(t)\equiv \beta_1$.
	    For any action profile $a\neq\vec{1}$ such that its associated $\beta>\beta_1$,
	    the event that $\beta(t)\equiv \beta_1$ for all $t$ is the same event as $S(0)>S_1^*$ and $S(t)>S_1^*$ for all $t$.
	    However, it can be seen that log-linear learning defines a stochastic process which visits every action profile in $A$ infinitely often.
	    That is, the probability that $\beta(t)\equiv \beta_1$ is $0$, and thus there must exist a $\bar{t}$ such that $S(t)\leq S_1^*$ for all $t\geq\bar{t}$ almost surely.
	\end{proof}

    It can be seen from \eqref{eqn:uSISGCG} that SISGCG can be represented by a history-dependent game, as the utility function depends on the history of play, so
    Theorem~\ref{thm:SS} allows us to reference SISGCG to a related exact potential game and deduce conditions guaranteeing that $\vec{1}$ is strictly stochastically stable under log-linear learning.
	\begin{proposition}\label{thm:sisgcg is ss}
		Let $g^S$ be an instance of SISGCG, and $P$ be log-linear learning as defined by \eqref{eq:log-linear learning} and \eqref{eq:asynchronus learning rule}.
		If $\beta_1/\gamma>1$, $\lambda+\gamma/\beta_1>1$, and $I(0)>0$, then $\vec{1}$ is stochastically stable in $g$ with $P$.
	\end{proposition}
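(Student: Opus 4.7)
My plan is to apply Theorem~\ref{thm:SS} by exhibiting a static reference game $\hat g$ relative to which the SISGCG instance $g^S$ is an aligned history-dependent game and in which $\vec 1$ uniquely maximises an exact potential function. Since the utility in \eqref{eqn:uSISGCG} is monotone both in the number of neighbours playing each action and in the infection level $I(t)$, the natural choice of reference freezes $I$ at its smallest attainable value. The assumption $\beta_1/\gamma>1$ combined with $I(0)>0$ lets me invoke Proposition~\ref{thm:positive invariant}, which guarantees $I(t)\geq I^\star:=1-\gamma/\beta_1\in(0,1)$ eventually, almost surely.

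\textbf{Reference game and alignment.} I would take $\hat g=(N,A,\hat U)$ with $\hat U_i(a)=a_i\,|\mathcal{N}_i(1)|(\lambda+I^\star)+(1-a_i)\,|\mathcal{N}_i(0)|$, mirroring \eqref{eqn:uSISGCG} but with $I(t)$ replaced by $I^\star$. For any $\alpha\in\mathcal{A}$ with $\alpha^T_{-i}\geq_{A_{-i}} a_{-i}$, condition~1 of Definition~\ref{def:binary action history var} follows from $\lambda+I(t)\geq\lambda+I^\star$ together with the fact that $|\mathcal{N}_i(1)|$ evaluated at $\alpha^T_{-i}$ dominates its value at $a_{-i}$; condition~2 is the dual observation that $|\mathcal{N}_i(0)|$ at $\alpha^T_{-i}$ is no larger than at $a_{-i}$, while the $a_i=0$ branch of \eqref{eqn:uSISGCG} carries no $I$-dependence. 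Hence $g^S$ is an aligned history-dependent game relative to $\hat g$.

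\textbf{Potential structure and equilibrium selection.} The reference $\hat g$ is a graphical coordination game with coordination benefit $\mu:=\lambda+I^\star$ on action $1$ and benefit $1$ on action $0$, so $\phi(a)=\sum_{(i,j)\in E}\bigl[\mu\,a_i a_j+(1-a_i)(1-a_j)\bigr]$ is an exact potential, verified by the same edge-by-edge computation used in Proposition~\ref{thm:CTI potential game}. Decomposing $\phi(\vec 1)-\phi(a)$ edge-by-edge for arbitrary $a\neq\vec 1$ yields $\mu$ per mixed edge and $\mu-1$ per all-$0$ edge, so $\{\vec 1\}=\arg\max_a\phi(a)$ precisely when $\mu>1$; the hypothesis on $\lambda,\beta_1,\gamma$ is designed to guarantee this. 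Lemma~\ref{thm:individual, monotone, local} shows that log-linear learning is local and monotone, so Theorem~\ref{thm:SS} then delivers the strict stochastic stability of $\vec 1$ in $g^S$.

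\textbf{Main obstacle.} The delicate point I anticipate is that Proposition~\ref{thm:positive invariant} supplies $I(t)\geq I^\star$ only for $t\geq\bar t$, whereas Definition~\ref{def:binary action history var} asks for the alignment inequalities on \emph{every} path $\alpha$. Because stochastic stability \eqref{eqn:SS to g hat} is a tail property that quantifies over sufficiently large $t$ and small $\tau$, and because the transient $[0,\bar t]$ is almost surely finite, it suffices to apply Theorem~\ref{thm:SS} to the process restarted at $\bar t$; the finite transient does not affect the $t\to\infty$ marginal distribution, so strict stochastic stability of $\vec 1$ transfers to the original SISGCG process.
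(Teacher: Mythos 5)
Your proposal follows essentially the same route as the paper's proof: invoke Proposition~\ref{thm:positive invariant} to treat $g^S$ after the almost-surely finite transient $\bar t$, freeze the infection level at $I^\star=1-\gamma/\beta_1$ to define the reference graphical coordination game, verify the two alignment inequalities of Definition~\ref{def:binary action history var} via monotonicity of $|\mathcal{N}_i(1)|$ and $|\mathcal{N}_i(0)|$, and apply Theorem~\ref{thm:SS} (with Lemma~\ref{thm:individual, monotone, local}) once $\vec 1$ is the sole potential maximizer. The only cosmetic difference is that you write out the potential function and the edge-by-edge comparison explicitly, whereas the paper cites the standard GCG result; your handling of the transient is the same as the paper's.
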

	\begin{proof}
        Denote the SISGCG model by $g^S$, which is played on graph $G=(N,E)$ with $q+\gamma/\beta_1>1$ and $I(0)>0$. Consider $g^S$ as played after time $\bar{t}$ as given by Proposition~\ref{thm:positive invariant}.
		Game $g^S$ is a history-dependent game since~\eqref{eqn:uSISGCG} depends on $I(t)$, which is a function of the history $\alpha$.
		Thus, we have $g^S=(N,A,U)\in\mathcal{G}^{\mathcal{A}}$ where $U=\{U^\alpha\mid \alpha\in\mathcal{A}\}$.
		
		Now we let $\hat{g}^S=(N,A,\hat{U}^S)$ be a graphical coordination game played on graph $G$, where the utility function $\hat{U}^S$ is given by~\eqref{eqn:uSISGCG} with $I(t)= 1-\gamma/\beta_1$.
		Standard results give that $\hat{g}^S$ is an exact potential game and that $\vec{1}$ is its sole potential function maximizer~\cite{young1993evolution}.
%
	    %
    	
    	We now use $\hat{g}^S$ to show that $g^S$ is an aligned history-dependent game.
    	We verify $U_i^\alpha(1,\alpha^T_{-i})\geq \hat{U}^S_i(1,a_{-1})$ holds for $\alpha^T_{-i}\geq_{A_{-i}}a_{-i}$ and $t>\bar{t}$.
    	It can be rewritten for $t>\bar{t}$ as
    	\begin{equation}
    	    \sum_{j\in\mathcal{N}_i(1;\alpha^T_{-i})}\lambda+I(t)\geq\sum_{j\in\mathcal{N}_i(1;a_{-i})}\lambda+\gamma/\beta_1,
    	\end{equation}
    	where $\mathcal{N}_i(1;a_{-i})$ denotes the neighbors of $i$ which play $1$ given profile $a$.
    	This expression holds because $\alpha^T_{-i}\geq_{A_{-i}} a_{-i}\Rightarrow |N_i(1;\alpha^T_{-i})|\geq |N_i(1;a_{-i})|$ and by Proposition~\ref{thm:positive invariant}. 
    	An argument with the same structure holds for $U_i^\alpha(0,\alpha^T_{-i})\leq \hat{U}^S_i(0,a_{-1})$.
    	Thus, $g^S$ is an aligned history-dependent game, and Theorem~\ref{thm:SS} gives that $\vec{1}$ is strictly stochastically stable. 
	\end{proof} 
    
Figure~\ref{fig:SISGCG} plots a numerical example of the SISGCG and its associated static game,
where 
$\gamma=0.25$,
$\beta_0=0.9$, $\beta_1=0.45$, and $\lambda=\gamma/\beta_1+\epsilon$.
It can be seen that the average $I(t)$ across SISGCG rapidly shoots above $1-\gamma/\beta_1$, which is the
fixed disease $I(t)$ for the GCG game.
This confirms the intuition of Proposition~\ref{thm:positive invariant} that $1-\gamma/\beta_1$ can be treated as an effective lower bound of $I(t)$ which we use to define the static game. 
Figure~\ref{fig:SISGCG} shows that the SISGCG model rapidly finds the all $\vec{1}$ state faster and more consistently than the reference game,
which supports
Proposition~\ref{thm:sisgcg is ss} experimentally.
Similar to the above CTI sharing example, $\vec{1}$ is only stochastically stable by an $\epsilon$ margin, explaining the phenomenon it does not appear to find the $\vec{1}$ state frequently.
Critically, the history-dependent game framework can be applied to characterize agent behaviors in a stochastically complex epidemic model.
	
\section{Proving the Monotone Coupling}
\label{sec:proofs}
\subsection{A Primer on Monotone Couplings}
    We begin with the definition of {\em monotone coupling}, the core analytical device for this paper.
	\begin{definition}\label{def:monotonic coupling}
			Let $X$ be a countable set with partial ordering $\leq_X$ and $p_1,p_2$ be probability measures on 
			$(X,\mathcal{F})$, where $\mathcal{F}$ is a $\sigma$-algebra of $X$. 
			Then, a \emph{monotone coupling} of $p_1,p_2$ is a probability measure $p$ on $(X^2,\mathcal{F}^2)$ satisfying the following for all $x,y\in X$ 
		\begin{equation}\label{eqn:monotonic coupling}
				\sum_{x\leq_{X}y'} p(x,y')=p_2(y') \mbox{ and }\sum_{y\geq_Xx'}p(x',y)=p_1(x').
		\end{equation}
 
	\end{definition}
	
	A monotone coupling is a useful tool for analyzing the component probability measures $p_1$ and $p_2$.
	In particular, the following property holds for monotone couplings.
    \begin{proposition}[Paarporn et al.,~\cite{paarporn2017}]\label{thm:increasing functions in expectation}
        Let $p_1,p_2$ be probability measures on $(X,\mathcal{F})$.
        If $p$ is a monotone coupling of $p_1,p_2$, then for any increasing random variable $Z:X\rightarrow \mathbb{Z}_+$ we have
        \begin{equation}\label{eq:coupling equality}
            \mathbb{E}_{p_1}(Z)-\mathbb{E}_{p_2}(Z)=\sum_{\eta=0}^\infty p(Z^c_\eta,Z_\eta),
        \end{equation}
        where $Z_\eta=\{a\mid Z(a)>\eta\}$ and $Z_\eta^c:=X\setminus Z$ is the complement set of $Z\subseteq X$.
    \end{proposition}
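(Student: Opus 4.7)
The plan is to combine the layer-cake formula for nonnegative integer-valued random variables with the support constraint inherent in a monotone coupling. Since $Z:X\to\mathbb{Z}_+$, the pointwise identity $Z(a)=\sum_{\eta=0}^\infty \mathds{1}_{Z_\eta}(a)$ holds, and taking expectations under each $p_i$ and interchanging sum and integral (justified by Tonelli, as every summand is nonnegative) gives
\begin{equation}
\mathbb{E}_{p_i}(Z)=\sum_{\eta=0}^\infty p_i(Z_\eta),\qquad i=1,2.
\end{equation}
Subtracting the two identities reduces the proposition to showing, slice by slice in $\eta$, that the difference $p_1(Z_\eta)-p_2(Z_\eta)$ equals $p(Z_\eta^c,Z_\eta)$ (up to the sign convention set by Definition~\ref{def:monotonic coupling}).

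For each $\eta$, I would rewrite the marginals in terms of the coupling. From the marginal equations \eqref{eqn:monotonic coupling} one reads off $p_1(Z_\eta)=p(Z_\eta\times X)$ and $p_2(Z_\eta)=p(X\times Z_\eta)$. Writing $X=Z_\eta\cup Z_\eta^c$ in each coordinate and cancelling the common diagonal block $p(Z_\eta\times Z_\eta)$ yields
\begin{equation}
p_1(Z_\eta)-p_2(Z_\eta)=p(Z_\eta\times Z_\eta^c)-p(Z_\eta^c\times Z_\eta).
\end{equation}

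The heart of the argument is then to invoke monotonicity of the coupling. Because $Z$ is increasing on the poset $(X,\leq_X)$, the level set $Z_\eta$ is an upper set, so if $(x,y)$ lies in the support of $p$ and $x\in Z_\eta$, the comparability condition baked into \eqref{eqn:monotonic coupling} forces $y$ to lie in $Z_\eta$ as well; this annihilates one of the two off-diagonal blocks, leaving only $p(Z_\eta^c\times Z_\eta)$. Summing the surviving slices over $\eta$ and reassembling gives the claimed identity.

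The main obstacle is essentially bookkeeping rather than analysis: one must carefully match the orientation of $\leq_X$ encoded in Definition~\ref{def:monotonic coupling} against the orientation of the upper set $Z_\eta$, so that the correct off-diagonal block is the one killed by the support condition and the signs in the final identity come out consistently. Beyond that, every step is routine — the swap of infinite summations is immediate from nonnegativity, the coupling partition is elementary, and no additional structure on $X$ beyond countability and the partial order is needed.
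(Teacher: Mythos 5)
The paper never proves this proposition --- it is imported from Paarporn et al.\ \cite{paarporn2017} and used as a black box --- so there is no internal proof to compare against; judged on its own, your layer-cake argument is the standard route and is essentially sound. The identity $\mathbb{E}_{p_i}(Z)=\sum_{\eta\geq 0}p_i(Z_\eta)$, the block decomposition of each marginal over $Z_\eta\times Z_\eta$, $Z_\eta\times Z_\eta^c$, $Z_\eta^c\times Z_\eta$, and the annihilation of one off-diagonal block because $Z_\eta$ is an upper set while $p$ charges only comparable pairs is exactly the right mechanism. Two small points to tighten: to write $p_1(Z_\eta)=p(Z_\eta\times X)$ and $p_2(Z_\eta)=p(X\times Z_\eta)$ you should first observe that summing either identity in \eqref{eqn:monotonic coupling} over its free variable forces all of $p$'s mass onto comparable pairs, so the restricted sums in Definition~\ref{def:monotonic coupling} really are the full marginals; and one should assume the expectations are finite (or $X$ finite, as in the intended application to $\mathcal{A}_T$) so the left-hand difference is not $\infty-\infty$.

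The ``bookkeeping'' you defer, however, is precisely where the statement bites, and you cannot leave it to the reader with the assurance that ``the signs come out consistently,'' because under Definition~\ref{def:monotonic coupling} taken literally they come out reversed. That definition puts the mass of $p$ on pairs whose \emph{first} coordinate is $\leq_X$ the second, with first marginal $p_1$ and second marginal $p_2$; since $Z_\eta$ is an upper set, the block that dies is $p(Z_\eta\times Z_\eta^c)$, and the surviving slice gives $p_2(Z_\eta)-p_1(Z_\eta)=p(Z_\eta^c\times Z_\eta)$, hence $\mathbb{E}_{p_2}(Z)-\mathbb{E}_{p_1}(Z)=\sum_{\eta\geq0}p(Z_\eta^c,Z_\eta)$. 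This is the opposite labeling from the proposition as printed: the displayed equation is correct only if $p_1$ is read as the dominating (second-coordinate) marginal, which is in fact how it is used in the proof of Theorem~\ref{thm:z functions}, where $P_\pi$ --- the second coordinate of $\mathbb{P}^{\hat g}_\pi$ in Theorem~\ref{thm:paths} --- is the measure whose expectation is larger. So your plan proves the true statement, but a complete write-up must carry the orientation through explicitly and note that the roles of $p_1,p_2$ in the proposition must be swapped relative to the paper's own Definition~\ref{def:monotonic coupling} (or the definition's two marginal identities exchanged); as it stands, the final ``reassembly'' step would deliver $\mathbb{E}_{p_1}(Z)-\mathbb{E}_{p_2}(Z)=-\sum_{\eta\geq0}p(Z_\eta^c,Z_\eta)$, not the claimed equality.
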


	
\subsection{Notations Required for Proofs}
	We write the probability that path $\alpha\in\mathcal{A}_T$ occurs with initial distribution $\pi \in \Delta(A)$ as
	\begin{equation}\label{eqn:Ppi hat}
		\hat{P}_\pi(\alpha)=\pi(\alpha^1)\prod_{t=1}^{T-1} \hat{P}^{\alpha^{t}}(\alpha^{t+1}),
	\end{equation}	
	where $\pi(\alpha^1)$ is the probability of $\alpha^1$ in initial distribution $\pi$.
Correspondingly, the probability that path $\alpha\in\mathcal{A}_T$ occurs with initial distribution $\pi \in \Delta(A)$ on $g\in\mathcal{G^{\mathcal{A}}}$ is
	\begin{equation}\label{eqn:Ppi}
		P_\pi(\alpha)=\pi(\alpha^1)\prod_{t=1}^{T-1} P^{\alpha^{\leq t}}(\alpha^{t+1}),
	\end{equation}	
where $\alpha^{\leq t}\in\mathcal{A}_t$ is the history $\alpha$ until time $t\in\{1,2,3,\dots,T\}$. 
We now present a result to connect the utility conditions of aligned history-dependent games to \eqref{eq:individual one step with history} and \eqref{eq:individual one step without history}.

    \begin{lemma}\label{thm:the bbP property}
    Let $P_i:A\times A\times \mathcal{U}^n\rightarrow [0,1]$ be local and increasing with respect to utility.
    Consider utility functions $\hat{U},U\in \mathcal{U}^n$.
    If  $U_i(1,a'_{-i})\geq \hat{U}_i(1,a_{-i})$, and  $\hat{U}_i(0,a_{-i})\geq U_i(0,a'_{-i})$, then $P_i(a',(1,a'_{-i}),U)\geq P_i(a,(1,a_{-i}),\hat{U})$ for some $a,a'\in A$.
    \end{lemma}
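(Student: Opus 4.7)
The plan is to exploit locality to strip away the dependence on the prior action profiles $a$ and $a'$, reducing both sides to values of the single-agent function $\bar P_i(1,\cdot)$ evaluated on two-entry utility vectors. Then I can interpolate between the two utility vectors in two monotone steps using the monotonicity-with-respect-to-utility condition.

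More concretely, by locality there exists $\bar P_i$ such that
\begin{equation}
P_i(a',(1,a'_{-i}),U)=\bar P_i\bigl(1,(U_i(0,a'_{-i}),U_i(1,a'_{-i}))\bigr)
\end{equation}
and
\begin{equation}
P_i(a,(1,a_{-i}),\hat U)=\bar P_i\bigl(1,(\hat U_i(0,a_{-i}),\hat U_i(1,a_{-i}))\bigr),
\end{equation}
so the inequality to be proved is a statement only about $\bar P_i(1,\cdot)$ on the two payoff vectors $(\hat U_i(0,a_{-i}),\hat U_i(1,a_{-i}))$ and $(U_i(0,a'_{-i}),U_i(1,a'_{-i}))$. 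The two hypotheses of the lemma say that in going from the first vector to the second the payoff of action $1$ weakly increases and the payoff of action $0$ weakly decreases. I will therefore interpolate through the intermediate vector $(\hat U_i(0,a_{-i}),U_i(1,a'_{-i}))$.

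The first interpolation step raises the payoff of the selected action $1$ from $\hat U_i(1,a_{-i})$ to $U_i(1,a'_{-i})$ while holding action $0$'s payoff fixed. This is exactly the setting of monotonicity with respect to utility applied to $a'_i=1$ and increment $l=U_i(1,a'_{-i})-\hat U_i(1,a_{-i})\geq 0$, so it gives $\bar P_i(1,(\hat U_i(0,a_{-i}),U_i(1,a'_{-i})))\geq\bar P_i(1,(\hat U_i(0,a_{-i}),\hat U_i(1,a_{-i})))$. The second step is the subtle one: I need to \emph{lower} the payoff of action $0$ from $\hat U_i(0,a_{-i})$ to $U_i(0,a'_{-i})$, and monotonicity as defined only talks about \emph{raising} the payoff of the selected action. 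I get around this by reversing the direction: run monotonicity with $a'_i=0$ and increment $l=\hat U_i(0,a_{-i})-U_i(0,a'_{-i})\geq 0$, which yields $\bar P_i(0,(\hat U_i(0,a_{-i}),U_i(1,a'_{-i})))\geq\bar P_i(0,(U_i(0,a'_{-i}),U_i(1,a'_{-i})))$. Because $A_i=\{0,1\}$ is binary and $\bar P_i(0,\vec v)+\bar P_i(1,\vec v)=1$ (from the first property of an individual learning rule together with locality), this is equivalent to $\bar P_i(1,(U_i(0,a'_{-i}),U_i(1,a'_{-i})))\geq\bar P_i(1,(\hat U_i(0,a_{-i}),U_i(1,a'_{-i})))$. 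Chaining the two inequalities produces the conclusion.

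The only real obstacle is this second step, since the monotonicity hypothesis is asymmetric between selected and non-selected actions; the resolution is simply to invoke it on the opposite action and then pass to the complementary probability using that the action set is binary. Everything else is a direct rewrite using locality, and the binary-action assumption is used exactly once, precisely where it is needed.
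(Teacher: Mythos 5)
Your proof is correct and follows essentially the same route as the paper's: both interpolate through the intermediate payoff vector pairing $U_i(1,a'_{-i})$ with $\hat{U}_i(0,a_{-i})$ (the paper's auxiliary utility $\bar{U}_i$), apply monotonicity once to raise the $1$-payoff and once to raise the $0$-payoff in reverse, convert the latter via complementary probabilities using the binary action set, and use locality to equate the probabilities across the two prior profiles. No substantive difference.
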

    \begin{proof}
    Assume $U_i(1,a'_{-i})\geq \hat{U}_i(1,a_{-i})$ and  $\hat{U}_i(0,a_{-i})\geq U_i(0,a'_{-i})$ for some $a,a'\in A$. Define
    \begin{equation}
    \begin{aligned}
        \bar{U}_i(x)=
        \begin{cases}
        U_i(1,a'_{-i}) & x_i=1\\
        \hat{U}_i(0,a_{-i}) & x_i=0.
        \end{cases}
    \end{aligned}
    \end{equation}
    Let $\bar{U}=(\hat{U}_1,\dots,\bar{U}_i,\dots,\hat{U}_n)$.
    Because $U(1,a'_{-i})=\hat{U}(1,a_{-i})+l$ for some $l\geq0$, we have $P_i(a,(1,a_{-i}),\bar{U})\geq P_i(a,(1,a_{-i}),\hat{U})$ by definition of increasing with respect to utility.
    Similarly, because $\hat{U}(0,a_{-i})=\bar{U}(0,a'_{-i})=U(0,a'_{-i})+l$ for some $l\geq0$, we have $P_i(a',(0,a'_{-i}),\bar{U})\geq P_i(a',(0,a'_{-i}),U)$ again by definition of increasing with respect to utility.
    By adding $P_i(a',(1,a'_{-i}),\bar{U})$ to both sides of this expression, we derive $P_i(a',(1,a'_{-i}),U)\geq P_i(a',(1,a'_{-i}),\bar{U})$.
    Using these inequalities, we find
    \begin{equation}
    \begin{aligned}
        P_i(a',(1,a'_{-i}),U)&\geq P_i(a',(1,a'_{-i}),\bar{U})\\
        &=\bar{P}_i(1,\vec{U}_i(a'_{-i})) \\
        &=\bar{P}_i(1,\vec{U}_i(a_{-i})) \\
        &=P_i(a,(1,a_{-i}),\bar{U})\\
        &\geq P_i(a,(1,a_{-i}),\hat{U}),
    \end{aligned}
    \end{equation}
    where $\bar{P}_i$ is defined according to \eqref{eq:local LR} as $P_i$ is local by hypothesis.
    \end{proof}

    Our framework requires a careful partitioning of the action space corresponding to different types of agent action deviations.
	Let $f:A\rightarrow 2^{A}$ be defined as  $$f(a)=\{a'\in A \mid a_i\neq a_i',a_{-i}=a_{-i}' \mbox{ for } i\in N\},$$ 
	which is the
	set of action profiles reachable from $a$ via exactly one unilateral deviation.
	For $a,a'\in A$, let
	\begin{equation}
			g(a,a')=
			\begin{cases}
				i & a_i\neq a_i' \\
				0 & a=a' 	
			\end{cases}
	\end{equation}
	indicate which agent unilaterally deviated its action between action profiles $a,a'$. 
	Consider $a,a'\in A$ where $a'\geq_A a$.
	We consider several disjoint subsets of $f(a)$:
	\begin{enumerate}
		\item $r(a)=\{z\in f(a) \mid a_{g(a,z)}=1\}$,
		\item $q(a,a')=\{z\in f(a)\mid  z\leq_{A} a'\}\setminus r(a)$, and
		\item $s(a,a')=f(a)\setminus (q(a,a')\cup r(a))$.
	\end{enumerate}
	These sets can be interpreted as follows. 
	The set $r(a)$ is the set of action profiles that decrease with respect to $\geq_A$; both $q(\cdot,\cdot)$ and $s(\cdot,\cdot)$ increase with respect to $\geq_A$.
	Between $q(\cdot,\cdot)$ and $s(\cdot,\cdot)$, $q(\cdot,\cdot)$'s action profiles remain less than $a'$ and $s(\cdot,\cdot)$'s profiles are greater than or incomparable to $a'$.
	We now present three more analogous sets that are disjoint subsets of $f(a')$:
	\begin{enumerate}
		\item $R(a')=\{z\in f(a') \mid a'_{g(a',z)}=0\}$,
		\item $Q(a,a')=\{z\in f(a')\mid z\geq_{A} a\}\setminus R(a')$, and
		\item $S(a,a')=f(a')\setminus (Q(a,a')\cup R(a))$.
	\end{enumerate}
	The interpretation of these sets are reversed relative to $r(\cdot)$, $q(\cdot,\cdot)$ and $s(\cdot,\cdot)$, respectively.
	
	We now highlight some useful features of these sets.
    It is evident that $r(\cdot),q(\cdot,\cdot),s(\cdot,\cdot)$ are a disjoint partition of $f(a)$, and that $R(\cdot),Q(\cdot,\cdot),S(\cdot,\cdot)$ are a disjoint partition of $f(a')$.
	For any $a,a'$ such that $a'\geq_A a$, we relate these sets by a function $b^{a,a'}:f(a)\rightarrow f(a')$.
	To evaluate $b^{a,a'}(\bar{a})$, we identify the agent that deviates its action between $a$ and $\bar{a}$ and then deviate the agent's action in $a'$. 
	Formally, we have $b^{a,a'}(\bar{a})=(\neg a'_{g(a,\bar{a})},a'_{-g(a,\bar{a})})$ where we define $\neg a_i \in \{0,1\}\setminus \{a_i\}$ for $a_i\in A_i=\{0,1\}$.
	In particular, this function relates the disjoint subsets of $f(a)$ and $f(a')$ according to the following lemma.
	\begin{lemma}\label{thm:bijection}
	    If $a,a'\in A$ and $a\leq_{A} a'$, then the following statements hold:
	    \begin{enumerate}
	        \item $b^{a,a'}:r(a)\rightarrow S(a,a')$ is a bijection,
	        \item $b^{a,a'}:s(a,a')\rightarrow R(a')$ is a bijection, and
	        \item $b^{a,a'}:q(a,a')\rightarrow Q(a,a')$ is a bijection.
	    \end{enumerate}
	\end{lemma}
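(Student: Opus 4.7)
The map $b^{a,a'}$ acts on $\bar a\in f(a)$ by identifying the single agent $i = g(a,\bar a)$ in which $\bar a$ deviates from $a$ and then flipping that agent's action in $a'$. Reading off the deviating agent makes it obvious that $b^{a,a'}$ is injective on $f(a)$: if $b^{a,a'}(\bar a_1) = b^{a,a'}(\bar a_2)$, the deviating indices $g(a,\bar a_1)$ and $g(a,\bar a_2)$ coincide, hence $\bar a_1 = \bar a_2$. Since $|f(a)| = |f(a')| = |N|$ (each set is indexed by the single deviating agent), injectivity of $b^{a,a'}\colon f(a) \to f(a')$ upgrades to a bijection between the full sets.

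It remains to match the partition $\{r(a),\,q(a,a'),\,s(a,a')\}$ of $f(a)$ with the partition $\{S(a,a'),\,Q(a,a'),\,R(a')\}$ of $f(a')$ through $b^{a,a'}$. Because $a \leq_A a'$, for any agent $i$ the pair $(a_i, a'_i)$ takes exactly one of three values, and I would proceed by a single case split on this pair, computing $\bar a_i$ and $(b^{a,a'}(\bar a))_i$ and checking the defining predicates. When $(a_i,a'_i) = (1,1)$, one has $\bar a_i = 0$, hence $\bar a \in r(a)$; the image $z := b^{a,a'}(\bar a)$ satisfies $z_i = 0 < 1 = a_i$, so $z \not\geq_A a$, while $a'_i = 1$ precludes $z \in R(a')$, placing $z \in S(a,a')$. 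When $(a_i,a'_i) = (0,1)$, $\bar a_i = 1 \leq a'_i$ together with $a_i = 0$ gives $\bar a \in q(a,a')$; the image has $z_i = 0 \geq 0 = a_i$ and $z_{-i} = a'_{-i} \geq_{A_{-i}} a_{-i}$, so $z \geq_A a$ with $a'_i = 1$, placing $z \in Q(a,a')$. When $(a_i,a'_i) = (0,0)$, $\bar a_i = 1 > 0 = a'_i$ prevents $\bar a \leq_A a'$, so $\bar a \in s(a,a')$; in $a'$ we have $a'_i = 0$, hence $z \in R(a')$.

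Since both partitions align case-by-case under $b^{a,a'}$, the restrictions give the three asserted bijections $r(a) \to S(a,a')$, $q(a,a') \to Q(a,a')$, $s(a,a') \to R(a')$.

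\textbf{Main obstacle.} The argument is almost entirely bookkeeping and contains no analytic difficulty; the subtlety is that the lowercase subsets $r,q,s$ classify deviations from $a$ while the uppercase subsets $R,Q,S$ classify deviations from $a'$, and these orientations flip in a way that could easily be miswritten (for example, $r(a)$ collects downward single-flips from $a$ but pairs with $S(a,a')$, which collects downward single-flips from $a'$ that drop below $a$). Anchoring the entire case analysis to the single pair $(a_i,a'_i)$ of the deviating agent is the cleanest way to avoid that confusion and reduces the lemma to three lines of definition-checking.
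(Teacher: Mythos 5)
Your proof is correct, and it reaches the conclusion by a somewhat different route than the paper. The injectivity step is the same in both: equal images force equal deviating indices $g(a,\bar a_1)=g(a',b^{a,a'}(\bar a_1))=g(a',b^{a,a'}(\bar a_2))=g(a,\bar a_2)$, and binary actions then force $\bar a_1=\bar a_2$. The difference is in how surjectivity is obtained. The paper proves only the $r(a)\to S(a,a')$ case in detail, arguing surjectivity by an explicit element chase: given $z'\in S(a,a')$ it deduces $a'_i=1$, $z'_i=0$, then $a_i=1$ (else $z'\in Q(a,a')$), and exhibits the preimage $z\in r(a)$; the other two bijections are asserted to follow similarly. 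You instead establish injectivity on all of $f(a)$, use $|f(a)|=|f(a')|=|N|$ to conclude $b^{a,a'}$ is a bijection between the full one-step neighborhoods, and then do an exhaustive case split on the deviating agent's pair $(a_i,a'_i)\in\{(1,1),(0,1),(0,0)\}$ (the pair $(1,0)$ being excluded by $a\leq_A a'$) to show $b^{a,a'}(r(a))\subseteq S(a,a')$, $b^{a,a'}(q(a,a'))\subseteq Q(a,a')$, $b^{a,a'}(s(a,a'))\subseteq R(a')$; since $\{r,q,s\}$ and $\{S,Q,R\}$ partition $f(a)$ and $f(a')$, surjectivity of the global map forces each restriction to be onto its target (any element of, say, $S$ missed by $b^{a,a'}(r(a))$ would have no preimage at all). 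Your organization handles all three claimed bijections uniformly and makes the class pairing transparent through the single anchor $(a_i,a'_i)$, at the small cost of relying on the finite-cardinality upgrade rather than exhibiting preimages explicitly; the paper's construction is marginally more self-contained per case but must be repeated (or waved at) for the remaining two maps. If you wanted to tighten one sentence, spell out that the final step uses disjointness of $Q,R,S$ together with surjectivity of the global bijection — it is implicit in ``the partitions align,'' and is not a gap.
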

\online{
\begin{proof}
    Let $a,a'\in A$ such that $a'\geq_A a$.
    We proceed by proving $b^{a,a'}:r(a)\to S(a')$ is a bijection; the other bijection statements can be proved similarly.

	We begin by proving injectiveness, namely $b^{a,a'}(z)=b^{a,a'}(z')\implies z=z'$ for $z,z'\in r(a)$.
Observe that 
$$g(a,z)=g(a',b^{a,a'}(z))=g(a',b^{a,a'}(z'))=g(a,z'),$$ where the first and third equalities follow the definition of $b^{a,a'}$ and the middle one follows the hypothesis.
Injectiveness follows from $g(a,z)=g(a,z')$, meaning that $a,z$ and $a,z'$ differ by the same agent's unilateral deviation. 
Thus, the possible actions of agent $g(a,z)$ is given by $A_{g(a,z)}\setminus \{a_{g(a,z)}\}$, which is a singleton by the binary action property, leaving only one possible state $a$ that could transition to $r(a)$ via a unilateral deviation.
	Thus $z=z'$ as desired.
	
Next we show surjection, namely that for any $z'\in S(a,a')$, there exists a $z\in r(a)$ such that $b^{a,a'}(z)=z'$ for $a,a'\in A$ and $a\leq_{A}a'$.
By definition of $S(a,a')$, we have $z'\ngeq a'$. Since $z'\in f(a')$, we observe that $z'$ and $a'$ differ only by a single unilateral deviation corresponding to some agent $i$.
By partial ordering $\leq_{A}$, we infer $a'_i=1$ and $z'_i=0$; otherwise, $z'\ngeq a'$ would be violated.
	Further, we may infer $a_i=1$; otherwise, $a_i=0$ leads to $z'\in Q(a,a')$, which gives a contradiction to the definition of $z'$.
	It is easy to see by definition of $r(a)$ that $a_i=1\implies z\in r(a)$ satisfying $g(a,z)=g(a',z')$ as $z_i\neq a_i$, but $z_{-i}=a_{-i}$ as
	$z\in f(a)$.
	Note that $g(a,z)=g(a',z')$ is always satisfied when $b^{a,a'}(z)=z'$, by the definition of the function.
\end{proof}
}
	

\subsection{The One-Step Couplings}
    To prove Theorem~\ref{thm:z functions}, we construct a monotone coupling $\mathbb{P}^{\hat{g}}_\pi$ between measures $P_\pi$ and $\hat{P}_\pi$.
    We first construct a family of monotone couplings for each one-step transition (Theorem~\ref{thm:nu}),
    and apply it to show the coupling over histories (Theorem~\ref{thm:paths}).
    
			\begin{figure*}[!htbp]
		\begin{subnumcases}{\mathbb{P}^{a,\alpha}(\bar{a},\bar{a}')=\label{eqn:nu}}
	   				\frac{1}{|N|}\left(P^\alpha_{g(\alpha^T,\bar{a}')}(1)- \hat{P}^a_{g(a',\bar{a}')}(1)\right) &  $\bar{a}=a,\bar{a}'\in R$ \label{eqn:nu:R}\\
			\frac{1}{|N|}P^\alpha_{g(\alpha^T,\bar{a}')}(\bar{a}'_{g(\alpha^T,\bar{a}')}) &  $\bar{a}=a,\bar{a}'\in Q$\label{eqn:nu:Q} \\
			\frac{1}{|N|}\left(\hat{P}^a_{g(a,\bar{a})}(0)- P^\alpha_{g(a,\bar{a})}(0)\right) &  $\bar{a}\in r,\bar{a}'=\alpha^T$ \label{eqn:nu:r} \\						
			\frac{1}{|N|}\hat{P}^a_{g(a,\bar{a})}(\bar{a}_{g(a,\bar{a})}) &  $\bar{a}\in q,\alpha^T=\bar{a}'$ \label{eqn:nu:q}\\						
			\frac{1}{|N|}\hat{P}^a_{g(a,\bar{a})}(1) & $\bar{a}=b^{a,\alpha^T}(\bar{a}'),\bar{a}'\in R$ \label{eqn:nu:sR}\\
			\frac{1}{|N|}P^\alpha_{g(\alpha^T,\bar{a}')}(0) & $\bar{a}\in r,\bar{a}'=b^{a,\alpha^T}(\bar{a})$ \label{eqn:nu:rS}\\
			\begin{array}{r}
			\frac{1}{|N|}\Big(|N|-\sum\limits_{z\in q\cup r}\hat{P}^a_{g(a,z)}(z_{g(a,z)}) 
			-\sum\limits_{z'\in Q\cup R}P^\alpha_{g(\alpha^T,z')}(z'_{g(\alpha^T,z')})\Big)
			\end{array}
			& $a=\bar{a},\alpha^T=\bar{a}'$	\label{eqn:nu:corner}		\\			
					0 & \mbox{otherwise.} \label{eqn:nu:else}
		\end{subnumcases}
		\caption{The full specification of the one-step monotone coupling for Theorem~\ref{thm:nu}.
		We adopt the notational convention that $q,s,Q,S$ are assumed to take arguments $a,a'$, and $r,R$ take arguments $a,a'$. 
		\label{fig:onestep coupling}}
		\end{figure*}
		
	\begin{theorem}
		\label{thm:nu}					
		Let $g\in \mathcal{G}^{\mathcal{A}}$ denote an aligned history-dependent game, $\hat{g}$ be its reference static game, and $P$ be a local and monotone asynchronous learning rule. 
	    Then, a monotone coupling exists between $\hat{P}^a$ and $P^\alpha$ for any $\alpha\in\mathcal{A}$ and $a\in A$ when $a\leq_A\alpha^T$. 
		This monotone coupling $\mathbb{P}^{a,\alpha}:A^2\rightarrow [0,1]$ is given in (\ref{eqn:nu}) as shown in Figure~\ref{fig:onestep coupling}.
	\end{theorem}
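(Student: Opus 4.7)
The plan is to verify the three defining properties of a monotone coupling from Definition~\ref{def:monotonic coupling}: $\mathbb{P}^{a,\alpha}$ is non-negative, its support lies on monotone pairs $\bar{a} \leq_A \bar{a}'$, and the two marginal identities \eqref{eqn:monotonic coupling} hold with $p_1 = \hat{P}^a$ and $p_2 = P^\alpha$.

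For monotone support, I would handle each nonzero case of \eqref{eqn:nu} in turn. In \eqref{eqn:nu:R}, \eqref{eqn:nu:Q}, \eqref{eqn:nu:r}, \eqref{eqn:nu:q}, and \eqref{eqn:nu:corner} the inequality $\bar{a} \leq_A \bar{a}'$ follows immediately from $a \leq_A \alpha^T$ together with the definitions of $r, q, R, Q$. The cases \eqref{eqn:nu:sR} and \eqref{eqn:nu:rS} require unpacking the definition of $b^{a, \alpha^T}$ and verifying coordinate-wise that the deviating agent flips consistently in both coordinates; here the fact that $b$ maps $s \to R$ and $r \to S$ from Lemma~\ref{thm:bijection} forces the flipped index to respect $a \leq_A \alpha^T$.

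For non-negativity, the four single-probability cases \eqref{eqn:nu:Q}, \eqref{eqn:nu:q}, \eqref{eqn:nu:sR}, \eqref{eqn:nu:rS} are immediate. In \eqref{eqn:nu:R} and \eqref{eqn:nu:r}, I would apply Lemma~\ref{thm:the bbP property} to the deviating agent $i = g(\alpha^T, \bar{a}')$: since $a \leq_A \alpha^T$ gives $\alpha^T_{-i} \geq_{A_{-i}} a_{-i}$, the aligned hypothesis yields $P^\alpha_i(1) \geq \hat{P}^a_i(1)$, and complementation $\hat{P}^a_i(0) = 1 - \hat{P}^a_i(1)$ delivers the companion inequality needed for \eqref{eqn:nu:r}. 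The main obstacle is the corner \eqref{eqn:nu:corner}. I would partition $N$ into three classes according to $(a_i, \alpha^T_i) \in \{(1,1),(0,1),(0,0)\}$, rewrite both sums over $q \cup r$ and $Q \cup R$ as sums of per-agent probabilities grouped by these classes, and show each per-agent contribution is bounded above by $1$. The only non-trivial class reduces to $\hat{P}^a_i(1) + P^\alpha_i(0) \leq 1$, which again follows from Lemma~\ref{thm:the bbP property}.

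Finally, I verify the marginal identities by fixing one coordinate and summing the corresponding rows of \eqref{eqn:nu}. Taking $\bar{a} = a$ as a representative example, the contributing rows are \eqref{eqn:nu:R}, \eqref{eqn:nu:Q}, and \eqref{eqn:nu:corner}; the $P^\alpha$ terms across $Q \cup R$ cancel algebraically, and the key manoeuvre is to use the bijection $b^{a,\alpha^T}: s \to R$ from Lemma~\ref{thm:bijection} to rewrite $\sum_{\bar{a}' \in R} \hat{P}^a_{g(\alpha^T,\bar{a}')}(1)$ as $\sum_{z \in s} \hat{P}^a_{g(a,z)}(z_{g(a,z)})$, completing the sum over $f(a)$ and yielding the required $\hat{P}^a(a) = \frac{1}{|N|}\sum_i \hat{P}^a_i(a_i)$. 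The symmetric bijection $r \to S$ handles the $\bar{a}' = \alpha^T$ marginal. All remaining fixed values $\bar{a} \in r \cup q$ and $\bar{a}' \in R \cup Q$ each receive mass from at most two rows of \eqref{eqn:nu}, reducing transparently to $\frac{1}{|N|}\hat{P}^a_{g(a,\bar{a})}(\bar{a}_{g(a,\bar{a})})$ and $\frac{1}{|N|}P^\alpha_{g(\alpha^T,\bar{a}')}(\bar{a}'_{g(\alpha^T,\bar{a}')})$ respectively.
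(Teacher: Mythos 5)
Your proposal is correct and follows essentially the same route as the paper: it verifies nonnegativity case by case, handling rows \eqref{eqn:nu:R} and \eqref{eqn:nu:r} via Lemma~\ref{thm:the bbP property} and the corner \eqref{eqn:nu:corner} via a per-agent partition that is exactly the paper's identification $N_q=N_Q$, $N_s=N_R$, $N_r=N_S$, and it establishes the marginal identities by summing the contributing rows and invoking the bijections of Lemma~\ref{thm:bijection} (in particular $s\to R$ to complete the sum over $f(a)$ in the $\bar{a}=a$ case). The only differences are cosmetic: you check monotone support explicitly (the paper leaves it implicit in the restricted marginal sums) and you leave the single-row cases $\bar{a}\in s$, $\bar{a}'\in S$ and the off-support case $\bar{a}\notin f(a)\cup\{a\}$ as transparent, which they are.
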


		\begin{proof}
			Let $a\in A$ and $\alpha\in\mathcal{A}$ such that $a\leq_A \alpha^T$. 
			Let $g\in\mathcal{G}^{\mathcal{A}}$ be an aligned history-dependent game where $\hat{g}$ is its reference static game, and let $P$ be a local and monotone asynchronous learning rule. 
			To verify that $\mathbb{P}^{a,\alpha}$ is a monotone coupling, we must show the following conditions given by  Definition~\ref{def:monotonic coupling}. For any $\bar{a},\bar{a}'\in A$, it holds that
			\begin{enumerate}
				\item \label{cond:well defined} $\mathbb{P}^{a,\alpha}$ is a well-defined probability measure, 
				\item \label{cond:lesser sum} $\sum\limits_{z'\geq_{A}\bar{a}}\mathbb{P}^{a,\alpha}(\bar{a},z')=\hat{P}^{a}(\bar{a})$, and 
				\item \label{cond:greater sum} $\sum\limits_{z\leq_{A}\bar{a}'}\mathbb{P}^{a,\alpha}(z,\bar{a}')=P^{\alpha}(\bar{a}')$.
			\end{enumerate}
We begin by verifying Condition \ref{cond:lesser sum}.
We consider cases $\bar{a}\notin (f(a)\cup \{a\}),$ $\bar{a}\in q,\bar{a}\in r,\bar{a}\in s$ and $\bar{a}=a$ separately.
We use the notational convention that $q,s,Q,S$ are assumed to take arguments $(a,\alpha^T)$, and that $r,R$ take arguments $a,\alpha^T$ respectively.
The first case represents any $\bar{a}$ which cannot be achieved in a single unilateral deviation from $a$.
This gives that $\hat{P}^a(\bar{a})=0$, and thus all pairs of $\bar{a},z'$ must satisfy $\mathbb{P}^{a,\alpha}(\bar{a},z')=0$.
This holds as all parts of \eqref{eqn:nu} require $\bar{a}\in (f(a)\cup \{a\})$ except \eqref{eqn:nu:else}, which has the desired property.
			
We now consider the second case that $\bar{a}\in q$.
Note that only (\ref{eqn:nu:q}) satisfies this condition, so we have
			\begin{equation}
				\begin{aligned}
				\sum_{z'\geq_{A}\bar{a}'}\mathbb{P}^{a,\alpha}(\bar{a},z')&=\mathbb{P}^{a,\alpha}(\bar{a},\alpha^T) \\ 
				&=\hat{P}^a_{g(a,\bar{a})}(\bar{a}_{g(a,\bar{a})})/|N| =\hat{P}^{a}(\bar{a}).
				\end{aligned}			
			\end{equation}
			
			Next we consider $\bar{a}\in r$, which satisfies \eqref{eqn:nu:r}, \eqref{eqn:nu:rS}  because $b^{a,\alpha^T}$ is a bijection by Lemma~\ref{thm:bijection}.
			Thus 
			\begin{equation}
				\begin{aligned}
					\sum_{z'\geq_{A}\bar{a}'}\mathbb{P}^{a,\alpha}(\bar{a},z')&=
					\frac{1}{|N|}\big(\hat{P}^a_{g(a,\bar{a})}(0)\\
					&\qquad-P^\alpha_{g(a,\bar{a})}(0)
					+P^\alpha_{g(\alpha^T,\bar{a}')}(0)\big)\\
					&=\frac{1}{|N|}\hat{P}^a_{g(a,\bar{a})}(0)
					=\hat{P}^{a}(\bar{a}),
				\end{aligned}
			\end{equation}
			where the second equality follows as $g(a,\bar{a})=g(\alpha^T,\bar{a}')$ by definition of $b^{a,\alpha^T}$.
			The third equality follows as $\bar{a}\in{r}\implies \bar{a}_{g(a,\bar{a})}=0$.
			
			Considering $\bar{a}\in s$, we find only \eqref{eqn:nu:sR} applies, and thus we have
			\begin{equation}
				\sum_{z'\geq_{A}\bar{a}'}\mathbb{P}^{a,\alpha}(\bar{a},z')
				=\frac{1}{|N|}\hat{P}^a_{g(a,\bar{a})}(1)
				=\hat{P}^{a}(\bar{a}),
			\end{equation}	
			where $\bar{a}\in s\implies\bar{a}_{g(a,\bar{a})}=1$ (otherwise, $\bar{a}$ would be in $q$).
			
			The final case for Condition~\ref{cond:lesser sum} is $\bar{a}=a$.
			We find cases \eqref{eqn:nu:R}, \eqref{eqn:nu:Q}, and \eqref{eqn:nu:corner} apply, yielding:
			\begin{equation}
			\begin{aligned}
				\sum_{z'\geq_{A}\bar{a}'}&\mathbb{P}^{a,\alpha}(\bar{a},z')
				=\frac{1}{|N|}\bigg(|N|-\sum_{z\in q\cup r}\hat{\bbP}^a_{g(a,z)}(z_{g(a,z)})\\
				&\qquad-\sum_{z'\in R}\hat{\bbP}^a_{g(\alpha^T,z')}(1)\bigg)\\
				&=\frac{1}{|N|}\sum_{z\in f(a)}\left(1-\hat{\bbP}^a_{g(a,z)}(z_{g(a,z)})\right)
				=\hat{P}^{a}(\bar{a}),			
				\end{aligned}
			\end{equation}
			where the first equality follows because (i) the sums over $Q\cup R$ are equivalent to the sums over $Q$ and $R$ as $Q,R$ are disjoint, and (ii) $z'\in R\Leftrightarrow z'_{g(\alpha^T,z')}=1$ by definition of $R$.
			The second equality follows as the $R$ sum is equivalent to one over $s$ by bijection $b^{a,\alpha^T}$, and thus we may combine it with the sum over $q\cup r$, leading to a sum over $f(a)$ and $|f(a)|=|N|$.
			We omit arguments for Condition~\ref{cond:greater sum} as they re in parallel to Condition~\ref{cond:lesser sum}.
			
			To verify Condition~\ref{cond:well defined}, we consider each case of (\ref{eqn:nu}) separately.
			Eqs.~(\ref{eqn:nu:Q}), (\ref{eqn:nu:q}), (\ref{eqn:nu:sR}), (\ref{eqn:nu:rS}), and (\ref{eqn:nu:else}) are trivial to verify because these probabilities are well-defined.
			Lemma~\ref{thm:the bbP property} provides the left-hand side of the equivalence:
			\begin{equation}\label{eqn:bbP equvalence}
             P^\alpha_i(1)\geq\hat{P}^a_i(1)\Leftrightarrow\hat{P}^a_i(0)\geq P^\alpha_i(0),
			\end{equation}
			where the right-hand side follows from algebraic manipulations using $P^\alpha_i(1)+P^\alpha_i(0)=1=\hat{P}^a_i(1)+P_i(0)$.
			Eq.~(\ref{eqn:nu:R}) follows directly from the hypothesis and (\ref{eqn:nu:r}) holds from the right-hand side of the equivalence.
			
			The remaining case is (\ref{eqn:nu:corner}), for which we define
			sets $N_{q}=\{g(a,z)\mid z\in q\}$, $N_{Q}=\{g(\alpha^T,z)\mid z\in Q\}$ and so on for $r,s,R,S$.
			We denote unions of these sets as $N_{qr}:=N_q\cup N_r$,  $N_{QR}:=N_Q\cup N_R$ and so on for the other combinations of $q,r,s$ and $Q,R,S$.
			Recall that $q,r,Q,R$ are partitions over states that $a,\alpha^T$ may transition to; similarly, $N_{qr}$, $N_{QR}$ are partitions of agents whose unilateral deviations lead to such transitions.
			This enables us to expand (\ref{eqn:nu:corner}) as:
            %
			\begin{equation}
			\begin{aligned} \label{eqn:nu a alpha}
				\mathbb{P}^{a,\alpha}(\bar{a},\bar{a}')&=
				\frac{1}{|N|}\Bigg(\sum_{i\in N_{qr}\cap N_{QR}}(1-\hat{P}^a_i(\neg a_i)\\
				&\qquad-P^\alpha_i(\neg \alpha^T_i))\\
				&\qquad+\sum_{i\in N_{qr}\setminus N_{QR}}(1-\hat{P}^a_i(\neg a_{i}))\\
				&\qquad+\sum_{i\in N_{QR}\setminus N_{qr}}(1-P^\alpha_i(\neg \alpha^T_{i}))\Bigg).
			\end{aligned}		
			\end{equation}
			This expansion takes advantage of $|N|=|f(a)|$ which allows $|N|$ to enter the sums as $1$.
			It now suffices to show that the summand of each sum is a well-defined probability, of which the last two terms clearly are.
			
		    We begin by investigating $i\in N_{qr}\cap N_{QR}$.
		    In particular, we have $N_q=N_Q,N_s=N_R,N_r=N_S$ due to $b^{a,\alpha^T}$ and its bijectiveness due to Lemmas~\ref{thm:bijection}.
		    By disjointness of $q,r$ we have $N_{qr}=N_{QS}$ which we apply to $N_{qr}\cap N_{QR}=N_{QS}\cap N_{QR}=N_Q=N_q$.
			Applying definitions of $q,Q$, we find $i\in N_q\implies\neg a_i=1,\neg \alpha^T_i=0$.
			Thus, the summand of the first sum for $i\in N_q$ is given by
			\begin{equation}
			    1-\hat{P}^a_i(1)-P^\alpha_i(0)\geq 1-P^\alpha_i(1)-P^\alpha_i(0)=0,
			\end{equation}
			where the inequality follows \eqref{eqn:bbP equvalence}, giving that the summands in the first term of~\eqref{eqn:nu a alpha} are themselves well-defined probabilities.
			As all conditions have been met, $\mathbb{P}^{a,\alpha}$ is a monotone coupling.
			%
		   %
		%
			%
	\end{proof}

\subsection{A Monotone Coupling Over Histories}
We now present coupling $\mathbb{P}^{\hat{g}}_\pi$, which is constructed using the one-step coupling.
    We define the indicator function $\mathds{1}$ such that $\mathds{1}(P)=1$ if $P$ is a true logical proposition and $\mathds{1}(P)=0$ otherwise.
    
\begin{theorem}\label{thm:paths}
Let $g\in\mathcal{G}^{\cal A}$ be an aligned history-dependent game, $\hat{g}$ be its corresponding static game, $P$ be a local and monotone asynchronous learning rule, and $\pi\in\Delta(\mathcal{A})$ be a distribution over all action profiles.
Then, $\mathbb{P}_\pi^{\hat{g}}:\mathcal{A}_T^2\rightarrow[0,1]$ is a monotone coupling between $\hat{P}_\pi,P_\pi$.
		This coupling is given by
		\begin{equation}\label{eqn:path coupling}
		\begin{aligned}
			\mathbb{P}^{\hat{g}}_\pi(\alpha,  \bar{\alpha})=
			\pi(\alpha^1)\mathds{1}(\alpha^1=\bar{\alpha}^1)\prod_{t=1}^{T-1}\mathbb{P}^{\alpha^t,\bar{\alpha}^{\leq t}}(\alpha^{t+1},\bar{\alpha}^{t+1}),
		\end{aligned}
		\end{equation}
where $\alpha,\bar{\alpha}\in\mathcal{A}_T$ and $\pi\in \Delta(A)$.
	\end{theorem}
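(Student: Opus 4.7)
I would proceed by induction on the path length $T$. In the base case $T=1$, the product in~(\ref{eqn:path coupling}) is empty, so $\mathbb{P}^{\hat{g}}_\pi(\alpha,\bar{\alpha})=\pi(\alpha^1)\mathds{1}(\alpha^1=\bar{\alpha}^1)$ is the diagonal coupling of $\pi$ with itself, which is trivially a monotone coupling. For the inductive step, I would factor the length-$(T{+}1)$ coupling as
\[
\mathbb{P}^{\hat{g}}_\pi(\alpha,\bar{\alpha})=\mathbb{P}^{\hat{g}}_\pi(\alpha^{\leq T},\bar{\alpha}^{\leq T})\cdot\mathbb{P}^{\alpha^T,\bar{\alpha}^{\leq T}}(\alpha^{T+1},\bar{\alpha}^{T+1}),
\]
regarding the first factor as the length-$T$ path coupling (monotone by induction) and the second as the one-step monotone coupling from Theorem~\ref{thm:nu}.

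Next, I would verify the three conditions of Definition~\ref{def:monotonic coupling}. Non-negativity is immediate factor-by-factor. For the total mass, I would sum the second factor over $(\alpha^{T+1},\bar{\alpha}^{T+1})$ to obtain $1$ (since $\mathbb{P}^{\alpha^T,\bar{\alpha}^{\leq T}}$ is itself a probability measure by Theorem~\ref{thm:nu}), and then apply the induction hypothesis to the length-$T$ factor. For the marginal identity $\sum_{\bar{\alpha}\geq_{\mathcal{A}_{T+1}}\alpha}\mathbb{P}^{\hat{g}}_\pi(\alpha,\bar{\alpha})=\hat{P}_\pi(\alpha)$, I would decompose the sum into an outer sum over $\bar{\alpha}^{\leq T}\geq_{\mathcal{A}_T}\alpha^{\leq T}$ and an inner sum over $\bar{\alpha}^{T+1}\geq_A\alpha^{T+1}$; the inner sum collapses to $\hat{P}^{\alpha^T}(\alpha^{T+1})$ by the one-step marginal property in Theorem~\ref{thm:nu}, the outer sum collapses to $\hat{P}_\pi(\alpha^{\leq T})$ by induction, and their product equals $\hat{P}_\pi(\alpha)$ via~(\ref{eqn:Ppi hat}). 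The dual identity for $P_\pi(\bar{\alpha})$ follows by the symmetric argument using~(\ref{eqn:Ppi}) and the other marginal of $\mathbb{P}^{\alpha^T,\bar{\alpha}^{\leq T}}$.

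The main subtlety, and the key piece of bookkeeping, is that Theorem~\ref{thm:nu} only guarantees that $\mathbb{P}^{\alpha^T,\bar{\alpha}^{\leq T}}$ is a monotone coupling when $\alpha^T\leq_A\bar{\alpha}^T$. I would resolve this by observing that any monotone coupling per Definition~\ref{def:monotonic coupling} is automatically supported on ordered pairs: since its marginals sum to one over either coordinate, any mass on unordered pairs must vanish. Thus the inductive hypothesis forces $\mathbb{P}^{\hat{g}}_\pi(\alpha^{\leq T},\bar{\alpha}^{\leq T})=0$ whenever some $\alpha^t\not\leq_A\bar{\alpha}^t$ occurs at $t\leq T$. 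Consequently, every nonzero term in the factorization has $\alpha^T\leq_A\bar{\alpha}^T$, so Theorem~\ref{thm:nu} applies exactly where it is needed, and the remaining algebraic manipulations are routine.
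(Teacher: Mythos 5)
Your proposal is correct and follows essentially the same route as the paper: the marginal identities are verified by peeling off the final time step using the one-step marginal properties of Theorem~\ref{thm:nu} (the paper does this as an explicit telescoping of nested sums, which is just your induction unrolled), and the domain restriction $\alpha^T\leq_A\bar{\alpha}^T$ of Theorem~\ref{thm:nu} is handled by the same observation that a monotone coupling puts zero mass on unordered pairs, which the paper invokes at the minimal time where the ordering fails. No gap; the reorganization as induction on $T$ is cosmetic.
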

	\begin{proof}
		Let $\alpha,\bar{\alpha}\in \mathcal{A}_T$ and $g\in\mathcal{G}^\mathcal{A}$. Let $\hat{g}$ be the corresponding static game and $P$ be a local and monotone asynchronous learning rule.
		We begin by showing that if $\alpha\nleq_{\mathcal{A}_T}\bar{\alpha}$, then $\mathbb{P}^{\hat{g}}_\pi(\alpha,\bar{\alpha})=0$.
		Since $\mathbb{P}^{\hat{g}}_\pi(\alpha,\bar{\alpha})=0$ if $\alpha^1\neq\bar{\alpha}^1$, we only need to consider cases where $\alpha^1=\bar{\alpha}^1$.
		Inductively, we find that if $\alpha\nleq_{\mathcal{A}_T}\bar{\alpha}$, there must exist some $t\in\{1,2,3,\dots,T-1\}$ such that $\alpha^t\leq_{A}\bar{\alpha}^t$ but $\alpha^{t+1}\nleq_{A}\bar{\alpha}^{t+1}$; let $t$ be the minimum of these values.
		We have $\mathbb{P}^{\alpha^t,\bar{\alpha}^{\leq t}}(\alpha^{t+1},\bar{\alpha}^{t+1})=0$ because $\mathbb{P}^{\alpha^t,\hat{\alpha}^{\leq t}}$ is a well-defined monotone coupling by Theorem~\ref{thm:nu}, yielding $\mathbb{P}^{\hat{g}}_\pi(\alpha,\bar{\alpha})=0$.
		It also follows that $\mathbb{P}^{\hat{g}}_\pi$ will always yield a well-defined probability as it is either $0$ or a product of well-defined probabilities.
		Thus, we only need to show that the marginal probabilities are preserved given by \eqref{eqn:monotonic coupling}. 
		We begin by showing the left-hand equation of \eqref{eqn:monotonic coupling}, namely
		\begin{equation}\label{eqn:marginal path coupling}
			\sum_{\alpha\leq_{\mathcal{A}_T}z}\mathbb{P}^{\hat{g}}_\pi(\alpha,z)=\hat{P}_\pi(\alpha) \mbox{ for each }z\in\mathcal{A}_T
		\end{equation}
		and omit the proof for the right-hand equation because it is similar.
		By inspecting (\ref{eqn:path coupling}), we observe that we only need to consider $z$ such that $z^1=\alpha^1$ and $z$ differ
		at most one unilateral deviation between any $t,t+1$.
		With these two conditions, we rewrite
		
		\begin{equation}\label{eqn:combitorial form}
		\begin{aligned}
			\sum_{\alpha\leq_{\mathcal{A}_T}z}\mathbb{P}^{\hat{g}}_\pi(\alpha,z) &=\sum_{\alpha\leq_{\mathcal{A}_T}z}\pi(\alpha^1)\prod_{t=1}^{T-1}\mathbb{P}^{\alpha^t,z^{\leq t}}(\alpha^{t+1},z^{t+1})\\
			&=\pi(\alpha^1)\sum_{\alpha^2\leq_A z^2}\hs[2]\mathbb{P}^{\alpha^1,z^{\leq 1}}(\alpha^{2},z^{2})\ldots\\
			\quad\quad&\sum_{\alpha^{T}\leq_A z^{T}}\hs[2]\mathbb{P}^{\alpha^{T-1},z^{\leq T-1}}(\alpha^{T},z^{T})
		\end{aligned}
		\end{equation}
	as the combinatorial form. 
	This allows us to apply the marginal sum properties of $\mathbb{P}^{\alpha^t,z^{\leq t}}$ from Theorem~\ref{thm:nu}, where $t\in\{1,2,..,T\}$, as follows.
	First, consider the rightmost sum in \eqref{eqn:combitorial form}. It holds that
	\begin{equation}
	\begin{aligned}
		&\sum_{\alpha^{T}\leq_A z^{T}}\mathbb{P}^{\alpha^{T-1},z^{\leq T-1}}(\alpha^{T},z^{T})=\hat{P}^{\alpha^{T-1}}(\alpha^T).
	\end{aligned}
	\end{equation}
	Because this has no dependence on $z$, we may factor out $\hat{P}^{\alpha^{T-1}}(\alpha^T)$ and repeat the process on the new rightmost sum.
	After performing this process recursively on all sums, we have 
	\begin{equation}
	\sum_{\alpha\leq_{\mathcal{A}_T}z}\mathbb{P}^{\hat{g}}_\pi(\alpha,z)=\pi(\alpha^1)\prod_{t=1}^{T-1}\hat{P}^{\alpha^{t}}(\alpha^{t+1})=\\
	\hat{P}_\pi(\alpha)
	\end{equation}
	as desired, while accounting for the indicator functions in $\mathbb{P}^{\hat{g}}_\pi$.
	This concludes the proof of Theorem~\ref{thm:paths}.
	\end{proof}

\subsection{Proof of Theorem~\ref{thm:z functions}}\label{sec:proof of z functions}

\begin{proof}
    Let $g$ be an aligned history-dependent game, $\hat{g}$ be the associated static game, $P$ be a local and monotone asynchronous learning rule, and $\pi\in\Delta(\mathcal{A})$ be a distribution over all action profiles.
    Theorem~\ref{thm:paths} gives that there exists a monotone coupling between measures $P_\pi$ and $\hat{P}_\pi$, which is given by $\mathbb{P}_\pi^{\hat{g}}$.
    This enables us to apply Proposition~\ref{thm:increasing functions in expectation} to obtain
    \begin{equation}
        \mathbb{E}_{P_\pi}(Z)-\mathbb{E}_{P_\pi}(Z)\geq 0
    \end{equation}
    for any function that is monotone with respect to ordering $\geq_{\mathcal{A}_T}$.
    This holds because the coupling $\mathbb{P}_\pi^{\hat{g}}$ is a nonnegative-valued function, which concludes the proof.
\end{proof}

\section{Conclusion}
We have developed new analytical tools to relate behaviors of history-dependent games to some related, tractable static games. This represents a significant step towards making theoretical studies more faithful to reality. 
There are many directions for future research. One is to extend the present study to more complex types of history-dependent interaction, such as games with richer action spaces and settings where the relationship between the history-dependent game and the static game holds in expectation rather than for all time. Another direction is to explore the usefulness of the concepts and techniques in other settings, such as the cybersecurity dynamics framework~\cite{XuBookChapterCD2019},
or preventive and reactive defense dynamics~\cite{XuTNSE2021-GlobalAttractivity}.
Although games have been investigated in the cybersecurity dynamics framework \cite{XuGameSec13}, much research needs to be done. Moreover, it would be interesting to investigate whether chaotic behaviors can be exhibited by history-dependence as such behaviors are relevant to proactive defense dynamics \cite{XuHotSoS2015}.

\bibliographystyle{ieeetr}
\bibliography{library}

\begin{IEEEbiography}[{\includegraphics[width=1in,height=1.25in,clip,keepaspectratio,angle=-90]{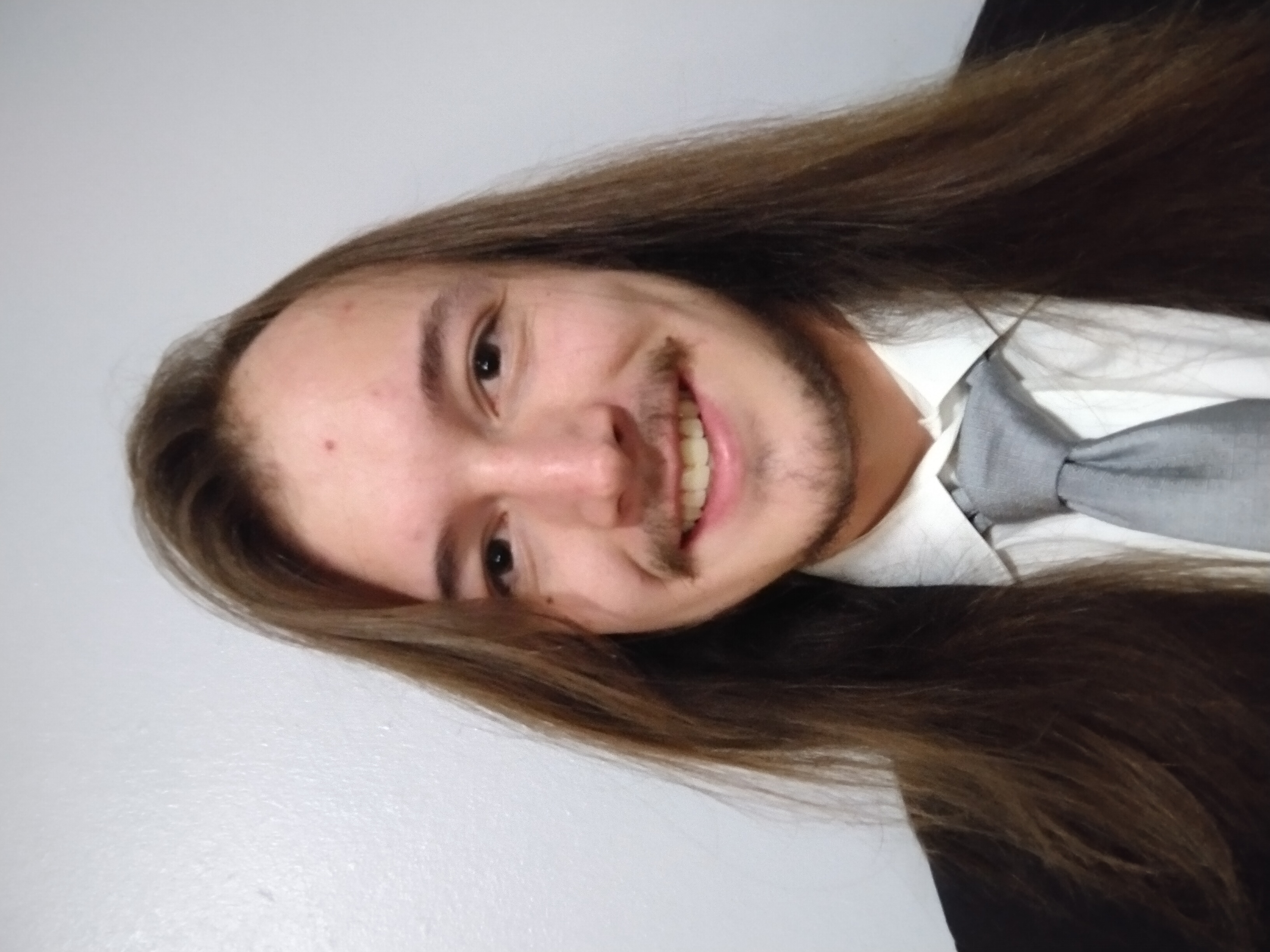}}]{Brandon C. Collins} (GS'20) 
is a Graduate Research Assistant at the University of Colorado Colorado Springs.
He received a B.S. degree in computer science from the University of Colorado Colorado Springs. He is currently pursuing a Ph.D. degree at the University of Colorado Colorado Springs.
He is interested in learning in mutli-agent systems and dynamic environments.
\end{IEEEbiography}

\begin{IEEEbiography}[{\includegraphics[width=1in,height=1.25in,clip,keepaspectratio]{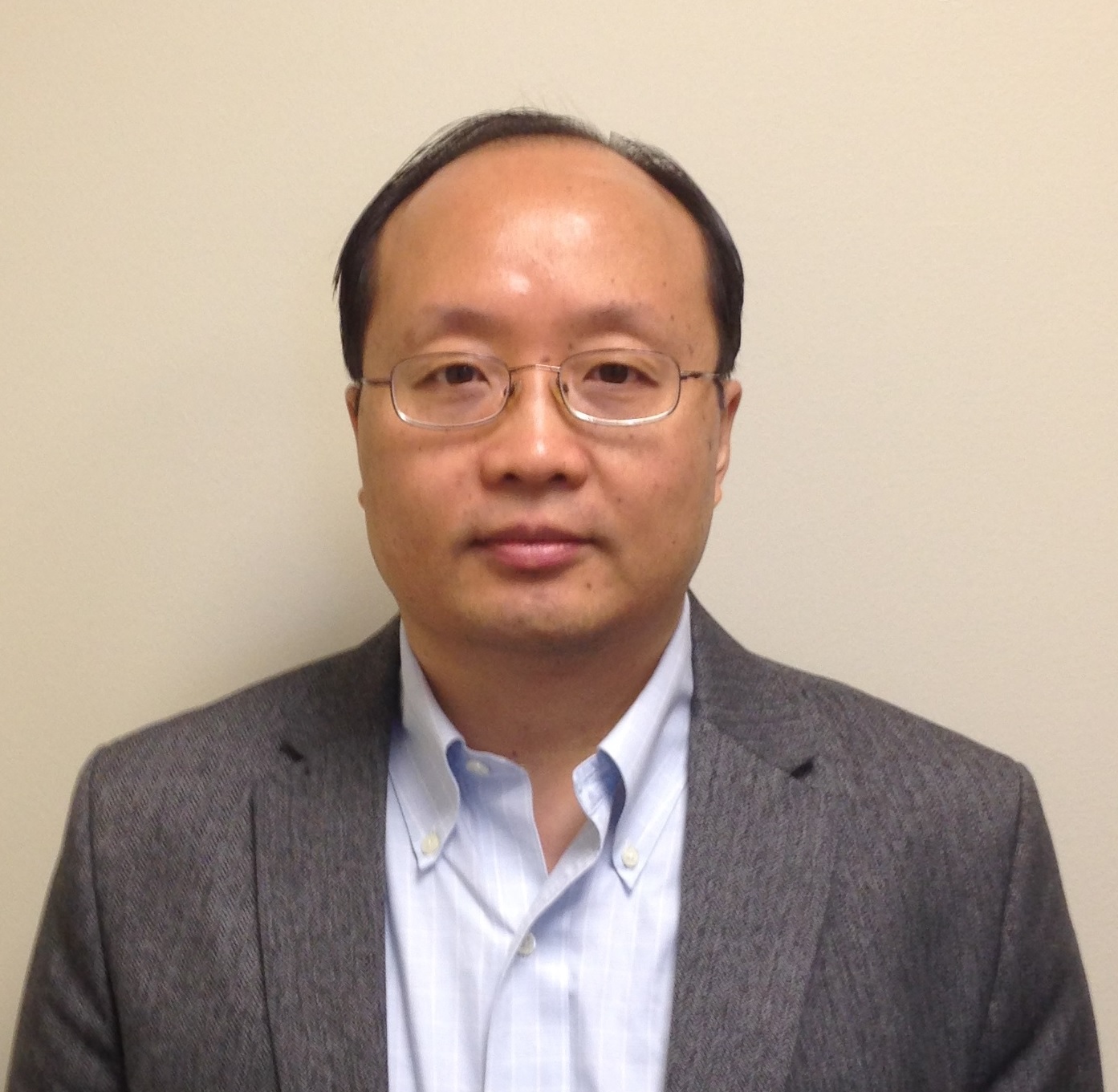}}]{Shouhuai Xu} (M’14–SM’20) received the Ph.D. degree in computer science from Fudan University
in 2000. He is the Gallogly Chair Professor in the Department of Computer Science, University of Colorado Colorado Springs (UCCS). He pioneered the Cybersecurity Dynamics approach as foundation for the emerging science of cybersecurity, with three pillars: first-principle cybersecurity modeling and analysis (the x-axis); cybersecurity data analytics (the y-axis); and cybersecurity metrics (the z-axis). He co-initiated the International Conference on Science
of Cyber Security and is serving as its Steering Committee Chair. He is/was an Associate Editor of IEEE Transactions on Dependable and Secure Computing (IEEE TDSC), IEEE Transactions on Information Forensics and Security (IEEE T-IFS), and IEEE Transactions on Network
Science and Engineering (IEEE TNSE).
\end{IEEEbiography}
\vspace{-10pt}

\begin{IEEEbiography}[{\includegraphics[width=1in,height=1.25in,clip,keepaspectratio]{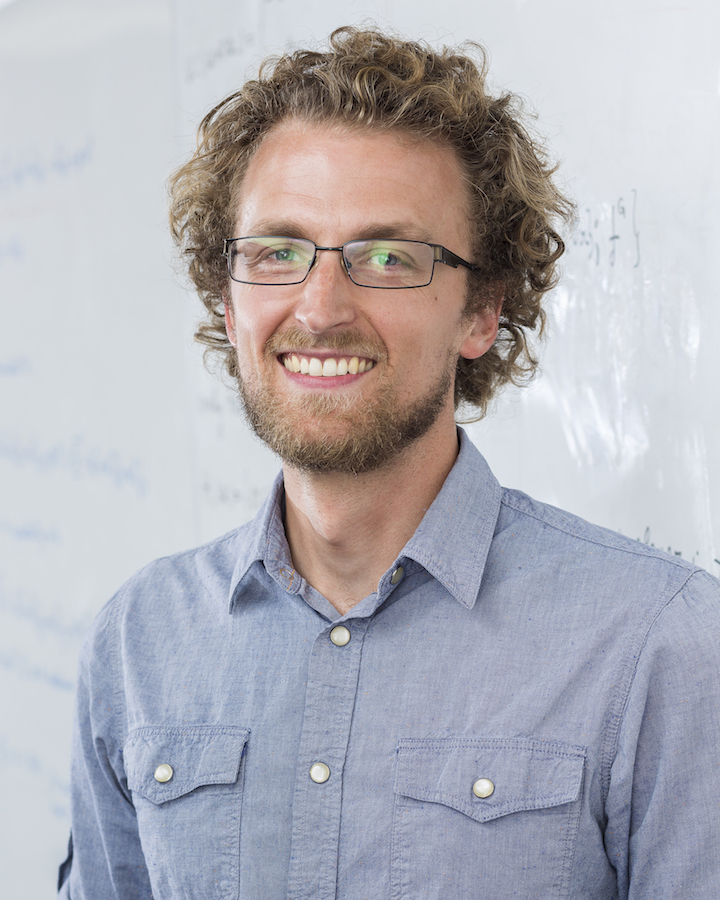}}]{Philip N. Brown}
 is an Assistant Professor in the Department of Computer Science at the University of Colorado Colorado Springs. Philip received the Bachelor of Science in Electrical Engineering in 2007 from Georgia Tech, after which he spent several years designing control systems and process technology for the biodiesel industry. He received the Master of Science in Electrical Engineering in 2015 from the University of Colorado at Boulder under the supervision of Jason R. Marden, where he was a recipient of the University of Colorado Chancellor's Fellowship. He received the PhD in Electrical and Computer Engineering from the University of California, Santa Barbara under the supervision of Jason R. Marden. He received the 2018 CCDC Best PhD Thesis Award from UCSB and the Best Paper Award from GameNets 2021. Philip is interested in the interactions between engineered and social systems.
\end{IEEEbiography}

\end{document}